\crefname{hypothesis}{Hypothesis}{Hypotheses}
\newcommand\boxprod{\mathbin{\text{\scalebox{.84}{$\square$}}}}
\newtheorem{theorem}{Theorem}
\crefname{theorem}{Theorem}{Theorems}
\Crefname{theorem}{Theorem}{Theorems}
\crefname{figure}{Fig.}{Figs.}
\Crefname{figure}{Fig.}{Figs.}
\crefname{section}{Section}{Sections}
\Crefname{section}{Section}{Sections}
\crefname{appendix}{Appendix}{Appendices}
\Crefname{appendix}{Appendix}{Appendices}
\crefname{table}{Table}{Tables}
\Crefname{table}{Table}{Tables}
\renewcommand*\FXLayoutInline[3]{{\@fxuseface{inline}\ignorespaces#3: #2}}
\begin{document}

\title{The Fiedler connection to the parametrized modularity
  optimization for community detection}

\author{Dimitris Floros}
\affiliation{Nicholas School of the Environment\\
        Duke University, Durham NC 27708, USA}

\author{Nikos Pitsianis}
\altaffiliation[Also at the ]{School of Electrical and Computer Engineering\\
  Aristotle University of Thessaloniki, Thessaloniki, 54124, Greece
}

\author{Xiaobai Sun}
\affiliation{Department of Computer Science\\
        Duke University, Durham NC 27708, USA}

\date{\today}

\begin{abstract}
  This paper presents a comprehensive analysis of the generalized spectral
  structure of the modularity matrix $B$, which is introduced by Newman as the
  kernel matrix for the quadratic-form expression of the modularity function $Q$
  used for community detection.  The analysis is then seamlessly extended to the
  resolution-parametrized modularity matrix $B(\gamma)$, where $\gamma$ denotes
  the resolution parameter. The modularity spectral analysis provides fresh and
  profound insights into the $\gamma$-dynamics within the framework of
  modularity maximization for community detection. It provides the first
  algebraic explanation of the resolution limit at any specific $\gamma$ value.
  Among the significant findings and implications, the analysis reveals that (1)
  the maxima of the quadratic function with $B(\gamma)$ as the kernel matrix
  always reside in the Fiedler space of the normalized graph Laplacian $L$ or
  the null space of $L$, or their combination, and (2) the Fiedler value of the
  graph Laplacian $L$ marks the critical $\gamma$ value in the transition of
  candidate community configuration states between graph division and
  aggregation. Additionally, this paper introduces and identifies the Fiedler
  pseudo-set (FPS) as the de facto critical region for the state transition.
  This work is expected to have an immediate and long-term impact on
  improvements in algorithms for modularity maximization and on
  model transformations.
\end{abstract}
 
\maketitle

\section{Introduction}
\label{sec:introduction}

Denote by $G(V,E)$ a graph $G$ with vertex set $V$ and edge set
$E$. Graph $G$ may represent a complex network in the real world that
exhibits non-homogeneous, non-trivial connectivity patterns and
indicates the co-existence of intrinsic structures, substructures, and
inexorable randomness. Detecting underlying community structures
within a network is important to identifying and understanding the
functions and functionalities across various parts of the network, as
well as their collective behaviors or properties in certain aspects,
such as propagation, synchronization, phase transition, resilience to
perturbation or adversarial
changes~\cite{desollaprice1965,barabasi1999,newman2001,girvan2002,newman2006,rosvall2008,newman2011,barabasi2016,newman2018}.

For community detection, which is synonymous with graph clustering, we
attempt to find an underlying or intrinsic community/cluster
configuration by a certain model.  Among other models, the modularity
model by Girvan and Newman~\cite{girvan2002} stands as one of the most
influential in academic research and in practical implementation
and deployment. Upon the discovery of the resolution limit with the
model~\cite{fortunato2007}, the generalized modularity model was
introduced, which is equipped with the resolution parameter,
conventionally denoted as $\gamma$~\cite{reichardt2006}. The
generalized model gave rise to advanced issues related to parameter
tuning, learning and validating, and subsequent development of various
multi-resolution modularity
models~\cite{sheikholeslami1998,fortunato2007,schaub2012,traag2013,kawamoto2015,chen2018,veldt2019}.
The purpose of this work is to advance our understanding of the inner
working mechanism within the modularity maximization framework.

We use the following basic assumptions and notations throughout the
paper.  Graph $G(V,E)$ is undirected, connected, with $n=|V|$ nodes
linked by $m=|E|$ edges. The edges are either unweighted or weighted
non-negatively. 
Denote by $A$ the adjacency matrix of graph $G$, $A\geq 0$.
Denote by $e$ the constant-1 vector. Denote by $d$ the degree vector, 
$d = Ae$.\footnote{We use $d$ to denote the degree vector, following the convention in the
  literature on graph theories and applications.}
Denote by $\Omega$ a configuration of communities or clusters on $G$,
$\Omega = \{ C_i \}$, where $C_i$ are connected subgraphs that cover
the entire vertex set $V$.  We focus on the case that the clusters in
configuration $\Omega$ do not overlap.

According to the modularity model, we search for a configuration
$\Omega_{\max}$ with the maximal modularity measure over all possible
cluster configurations:
\begin{equation}
  \label{eq:Q-maximization} 
  \Omega_{\max} = \arg \max_{\Omega} Q(\Omega).  
\end{equation}
The modularity function $Q$ is defined as follows, assuming
$G$ is unweighted,
\begin{equation}
  \label{eq:Q-original}
  Q(\Omega) = \frac{1}{2m} \sum_{C\in \Omega }
  \sum_{i,j\in C} \left(  A(i,j) - \frac{d(i)d(j)}{2m} \right), 
\end{equation}
where $d(i)$ is the degree of vertex $v_i$. 
By maximizing the $Q$-measure, one extracts a desirable community
property that the number of intra-community links exceeds what would
be expected by random chance among the graphs with the same degree
distribution.  For a graph with non-negatively weighted
edges, we simply replace $2m$ by $e^{\rm T}Ae$, the total volume (sum)
of $A$.
Without loss of generality, we scale $A$ so that the total volume of
$A$ is $1$, $ e^{\rm T}A e = 1$. By this scaling, the degree vector
$d$ becomes a stochastic vector, $e^{\rm T} d=1$.

An alternative expression of the modularity function can be arrived
from different modeling principles~\cite{reichardt2006},
\begin{equation}
  \label{eq:Q-intra-inter-terms}
  Q(\Omega) = \sum_{C\in \Omega }
  \alpha( C,C) - \alpha^2(C, V), 
\end{equation}
where, 
\begin{equation}
  \label{eq:Q-alpha}
  \alpha(C,S) \triangleq  e^{\rm T}A(C,S)e
  = \sum_{\substack{i\in C\\j\in S}} A(i,j),
  \quad  C , S \subseteq V.    
\end{equation}
In particular, $\alpha(C,V)$ is the volume of cluster $C$ on graph
$G$, $\alpha( \{v_i\},V) = d(i)$ is the (weighted) degree of vertex
$v_i$, and $\alpha(C,V) = \sum_{i\in C} d(i)$. We have scaled $A$ so 
that $\alpha(V,V)=1$.  In the expression
(\ref{eq:Q-intra-inter-terms}) at the cluster level, the first term 
contains only the intra-cluster connections, and the second term 
includes the interconnections as well.

In order to characterize the behavior of the modularity model and to
develop algorithmic rules for cluster merges and splits in the search
for an optimum configuration $\Omega_{\max}$ of
(\ref{eq:Q-maximization}), one first investigates the fundamental case
that $\Omega$ has at most two clusters.
Any two-cluster configuration $\Omega = \{C_1, C_2\}$ can be
represented by an indicator $s = s(\Omega)$: $s(i) = 1$ for $i\in C_1$
and $s(j) = -1$ for $j\in C_2$. Newman introduced the {\em modularity
  matrix} $B$ and expressed the modularity $Q$ as the quadratic
function in $s$ with kernel $B$,
\begin{equation}
  \label{eq:Q-formulation-B-kernel}
  B = A - d \cdot d^{\rm T},
  \quad 
  Q(s) = \frac{1}{2} s^{\rm T}\,  B\,  s,
  \quad
  s \in \{1,-1 \}^{n}.
\end{equation}
Denote by $\Omega_{\vee}$ the all-in-one configuration, i.e.,
$s(\Omega_{\vee}) = e$. Clearly, $e^{\rm T}B \, e = 0$.  If there
exists any configuration $\Omega$ with
$s(\Omega)^{\rm T}B s(\Omega) > 0$, then the division by $\Omega$ is
favored over the all-in-one configuration by the modularity
maximization (\ref{eq:Q-maximization}).

With the quadratic-form expression of $Q$ as in
(\ref{eq:Q-formulation-B-kernel}), Newman established a bridge to algebraic
graph partitions by the graph Laplacian. Newman noted in~\cite{newman2006} the
resemblance in the formats. In this paper, we provide in
\cref{sec:fiedler-connection} a precise description of the connection between
the modularity matrix $B$ and the normalized graph Laplacian $L$.
In \cref{sec:resolution-limit} we give an algebraic explanation for the
resolution limit with the original modularity model.
In \cref{sec:fiedler-transition}, we express $Q(\Omega \mid \gamma)$, at any value of
$\gamma$, as the quadratic function with $B(\gamma)$ as the kernel matrix,
$B(\gamma)$ is the parametrized modularity matrix.  We relate the spectral
structure of $B(\gamma)$, across all $\gamma$-variation, to the spectral
structure of the same graph Laplacian $L$. In particular, $B(\gamma)$ and $L$
share the same invariant subspace associated with the Fiedler value, which is
the smallest nonzero eigenvalue of the normalized graph Laplacian $L$, often
denoted as $\lambda_2$.
In \cref{sec:FPS-transition}, we present the Fiedler pseudo-set (FPS), denoted
as $\Lambda_2(\varepsilon)$, as the necessary extension of the Fiedler value
$\lambda_2$ to complex networks, where $\varepsilon$ is a perturbation
tolerance.
In \cref{sec:subgraph-division}, we describe the sub-modularity matrix for any
subgraph, preserving its topological and probabilistic interconnection to the
graph complement. This effectively extends our modularity spectral analysis to
community configurations with multiple communities.

Our key findings include the following.
\begin{inparaenum}[(i)]
\item The modularity matrix $B(\gamma)$ functions as a graph spectral modulation
and selection mechanism.  At any fixed value of $\gamma$, the mechanism puts all
graphs into two major categories, depending on whether their Fiedler values are
above or below $\gamma$.
\item There exists a state transition in the effects of the spectral modulation
and selection on each graph. The Fiedler value $\lambda_2$ marks the critical
point: when $\gamma < \lambda_2$, $B(\gamma)$ advocates keeping the graph
undivided; when $\gamma > \lambda_2$, $B(\gamma)$ promotes a graph split.
We have thus pinpointed a systematic source that is responsible for the
resolution limit that exists at any fixed value of $\gamma$, not only at
$\gamma=1$ for the original modularity model.
\item Instead of a single critical point, the Fiedler pseudo-set (FPS)
$\Lambda_2(\varepsilon)$ is the de facto critical region for the state
transition. The FPS serves several other graph/network analysis tasks as well,
including connectivity sensitivity analysis.
\end{inparaenum} 

We present several case studies with numerical results throughout the paper. We
give additional comments in \cref{sec:additional-implications}.

\section{The Fiedler connection}
\label{sec:fiedler-connection}

We present a rigorous analysis of the spectral structure of the
modularity matrix $B$ in terms of the generalized eigenvalue
decomposition: $Bx_j = \beta_j Dx_j$, $j=1,\dots,n$, where
$D = \mbox{\rm diag}(d)$ is the diagonal matrix with the degree vector
$d$ on the diagonal.  By the assumption that $G$ is connected, $d>0$.
Equivalently, we consider the standard eigenvalue decomposition,
$\hat{B} u_j = \beta_j u_j$, where
\begin{equation}
  \label{eq:Bhat-definition} 
  \hat{B} = D^{-1/2} B D^{-1/2}. 
\end{equation}
We may refer to $\hat{B}$ as the standard modularity matrix; to $B$,
as the prime one. When there is no ambiguity, we may simply call each the
modularity matrix.
Matrix $\hat{B}$ connects the normalized graph Laplacian
{$L = I - D^{-1/2} A D^{-1/2}$} and the modularity function $Q(s)$ as
follows.
We have on the one side,
\begin{equation}
  \label{eq:Bhat-Lhat-connection}
   \hat{B} = (I - L)  - d^{1/2}(d^{1/2})^{\rm T}, 
 \end{equation}
and on the other side,
\begin{equation}
  \label{eq:Q-in-Bhat-kernel}
  \begin{aligned} 
    Q(s\!\mid\! B)
    & = \, Q(y(s)  \!\mid\! \hat{B} )
      = y^{\rm T}(s)  \hat{B}\, y(s),
  \\
  & \phantom{xxx}  y(s) = D^{1/2} s,
  \quad s \in \{1,-1\}^n. 
 \end{aligned}
\end{equation}
The variable change from $s$ to $y(s)$ in (\ref{eq:Q-in-Bhat-kernel}) is
consistent with the conversion of the generalized eigenvalue problem with
$(B,D)$ to the standard eigenvalue problem with $\hat{B}$.

Clearly, for any $s\in \{1,-1\}^n$, $ {\rm sign}(y(s))=s $, and $y(s)$
is of unit length, i.e.,
$y(s)^{\rm T}y(s) = s^{\rm T}D s = e^{\rm T}d =1$.
That is, the discrete set
$ Y= \{y(s)=D^{1/2}s \!\mid\! s \in \{1,-1\}^{n}\}$ is a subset of
the unit sphere $S = \{u \!\mid\! u^{\rm T}u = 1, u\in \mathbb{R}^{n}\}$.
Thus, the quadratic function on $Y$ can be readily extended to $S$.
In order to gain insight into the discrete maximization of $Q$ over the
set $Y$, we intend to first study the continuous counterpart,
\begin{equation}
  \label{eq:Bhat-Rayleigh-quotient} 
  u_{\max} = \arg \max_{ u\in S}  R(\, u \!\mid\! \hat{B}),
  \quad 
  R(\, u \!\mid\! \hat{B}) \triangleq u^{\rm T} \hat{B} u.  
\end{equation}
The quadratic function $R(\, u \!\mid\!  \hat{B})$ on the unit sphere
$S$ is the Rayleigh quotient of matrix $\hat{B}$.  By the
Courant-Fischer min-max principle~\cite{horn2012}, the maximal value
of $R(\, u \!\mid\!M)$ with any symmetric matrix $M$ is the largest
eigenvalue of $M$, any maximum of $R(\,u \!\mid\! M)$ is an
eigenvector of $M$ associated with the largest eigenvalue.
Denote by $\beta_{\max}$ the maximal eigenvalue of $\hat{B}$.  If
$\beta_{\max}$ is simple (of multiplicity $1$), the maximum of the
Rayleigh quotient $R(\, u \!\mid\!  \hat{B})$ is unique, up to a sign
flip.  If the multiplicity of $\beta_{\max}$ is greater than $1$, then
any normalized vector in the invariant subspace associated with
$\beta_{\max}$ is a maximum of $R(\, u \!\mid\!  \hat{B})$.  Any
maximum $u_{\max}$ on the unit sphere $S$ can be mapped to
$Q( y(s) \!\mid\! \hat{B})$ or $Q( s \!\mid \!  B)$, with
$s = \mbox{sign}(u_{\max})$.
\begin{theorem}
  \label{thm:Lhat-Bhat-extreme-eigenpairs}
  Let $G(V,E)$ be a connected, undirected graph with adjacency matrix
  $A$. Denote by $d$ the degree vector scaled to be stochastic.
  Denote by $L$ the normalized Laplacian of $G$.
Let $\{\lambda_i\} $ be the Laplacian eigenvalues in non-descending
  order: $\lambda_1 \leq \lambda_2 \leq \cdots \leq \lambda_n$.
Let $\hat{B}$ be the standard modularity matrix as defined in
  (\ref{eq:Bhat-definition}).  Denote by ${\beta}_{\max}$ the largest
  eigenvalue of $\hat{B}$.
Then, the following hold true. 
\setlength{\leftmargini}{2pt}
  \begin{enumerate}
  \item The smallest Laplacian eigenvalue $\lambda_1$ is equal to $0$,
    and it is simple, i.e., $\lambda_1< \lambda_2$. The unique and
    normalized eigenvector associated with $\lambda_1$ is $d^{1/2}$.
\item Any Laplacian eigenvector is an eigenvector of $\hat{B}$, and
    vice versa.
\item The eigenvalues of $\hat{B}$ are
    $\{0\} \cup \{ 1-\lambda_i \mid i=2,\dots, n \}$.  In particular,
    \begin{equation}
      \label{eq:beta-max} 
         {\beta}_{\max} = \max\{0, 1-\lambda_2\}.
      \end{equation} 
      Any eigenvector $u_{\max}$ associated with $\beta_{\max}$ is
      characterized as follows,
\setlength{\leftmargini}{2pt}
    \begin{enumerate} 
    \item Case $\lambda_2 > 1$: $\beta_{\max} = 0$,
      $u_{\max} = d^{1/2} > 0 $, and $u_{\max}$ is unique.
\item Case $\lambda_2 < 1$: $\beta_{\max} = 1 - \lambda_2 > 0
      $, $u_{\max}$ is orthogonal to $d^{1/2}$, $e^{\rm
        T}u_{\max}=0$. Specifically,
      $u_{\max} \in {\rm span}\{u_i \mid Lu = \lambda_2
      u\}$. When and only when $\lambda_2$ is simple,
      $u_{\max}$ is unique.
\item Case $\lambda_2 = 1$: $\beta_{\max} = 0$,
      $u_{\max} \in {\rm span}\{d^{1/2}, u\mid Lu = u
      \}$.  The null space of $\hat{B}$ is at least of dimension
      $2$.  The elements of any maximum vector
      $u_{\max}$ either have the same sign or have mixed signs.
  \end{enumerate}
\end{enumerate}
\end{theorem}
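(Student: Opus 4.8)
My plan is to derive all the assertions from the single identity $\hat{B}=(I-L)-d^{1/2}(d^{1/2})^{\rm T}$ in~(\ref{eq:Bhat-Lhat-connection}) together with one Perron--Frobenius input. Write $w\triangleq d^{1/2}$, a unit vector by the normalization $e^{\rm T}d=1$. For item~1, observe that $D^{-1/2}AD^{-1/2}$ is symmetric, nonnegative, and similar (via $D^{1/2}$) to the row-stochastic matrix $D^{-1}A$; since $G$ is connected this matrix is irreducible, so by Perron--Frobenius its spectral radius equals $1$ and that eigenvalue is simple. A one-line check, $D^{-1/2}AD^{-1/2}w=D^{-1/2}Ae=D^{-1/2}d=w$, identifies $w$ as the Perron eigenvector, and passing to $L=I-D^{-1/2}AD^{-1/2}$ gives $\lambda_1=0$ simple with normalized eigenvector $d^{1/2}$, hence $\lambda_1<\lambda_2$.

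Next I would take an orthonormal eigenbasis $u_1=w,u_2,\dots,u_n$ of the symmetric matrix $L$ with $Lu_i=\lambda_i u_i$. Orthogonality of eigenspaces for distinct eigenvalues, together with the simplicity of $\lambda_1$, forces $w^{\rm T}u_i=0$ for all $i\ge 2$. Substituting into~(\ref{eq:Bhat-Lhat-connection}) gives $\hat{B}w=(I-L)w-w(w^{\rm T}w)=w-w=0$ and $\hat{B}u_i=(1-\lambda_i)u_i-w(w^{\rm T}u_i)=(1-\lambda_i)u_i$ for $i\ge2$. Hence $\{u_i\}$ is a common orthonormal eigenbasis of $L$ and $\hat{B}$, which establishes the eigenvector correspondence in item~2, and the spectrum of $\hat{B}$ is $\{0\}\cup\{1-\lambda_i:i=2,\dots,n\}$; reading these in decreasing order yields $\beta_{\max}=\max\{0,1-\lambda_2\}$. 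For the ``vice versa'' direction of item~2 one runs the same computation backwards via $I-\hat{B}=L+ww^{\rm T}$; the only obstruction arises when $1$ is a Laplacian eigenvalue, in which case the null space of $\hat{B}$ mixes the $L$-eigenspaces for $0$ and for $1$, and the correspondence must be read at the level of eigenspaces --- this is precisely Case~(c) below.

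Finally, for the characterization of $u_{\max}$ I would compute $\ker(\hat{B})$ directly: writing $u=cw+v$ with $v\perp w$, (\ref{eq:Bhat-Lhat-connection}) gives $\hat{B}u=(I-L)v$, so $\hat{B}u=0$ iff $Lv=v$; hence $\ker(\hat{B})=\operatorname{span}\{d^{1/2}\}\oplus\{v:Lv=v\}$. The three subcases then drop out. If $\lambda_2>1$, no $\lambda_i$ equals $1$, so $\beta_{\max}=0$ is attained only on $\operatorname{span}\{d^{1/2}\}$, giving the unique (up to sign) positive maximizer $d^{1/2}$. If $\lambda_2<1$, then $\beta_{\max}=1-\lambda_2>0$ and its eigenspace is the Fiedler space $\{u:Lu=\lambda_2 u\}$, whose vectors are all orthogonal to $d^{1/2}$, with uniqueness up to sign exactly when $\lambda_2$ is simple. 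If $\lambda_2=1$, then $\beta_{\max}=0$ and $\ker(\hat{B})=\operatorname{span}\{d^{1/2}\}\oplus\{v:Lv=v\}$ has dimension at least $2$, containing the all-positive vector $d^{1/2}$ as well as Fiedler vectors $v$ that, being orthogonal to $d^{1/2}>0$, must have entries of mixed sign. The matrix manipulations are routine; the only step needing genuine care is the degenerate Case~(c), where the eigenvalue $0$ of $\hat{B}$ coincides with the shifted Laplacian eigenvalue $1-\lambda_2$, so both the eigenvector correspondence in item~2 and the sign structure of the maximizers must be stated with that overlap in mind. I expect that to be the only real subtlety; everything else is a mechanical consequence of~(\ref{eq:Bhat-Lhat-connection}) and Perron--Frobenius.
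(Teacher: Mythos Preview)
Your proposal is correct and follows essentially the same route as the paper: both arguments hinge on the identity~(\ref{eq:Bhat-Lhat-connection}) together with the observation that $d^{1/2}(d^{1/2})^{\rm T}$ is the rank-one orthogonal projector onto $\ker L$, after which the eigenpair correspondence and the case split on $\lambda_2$ are mechanical. The paper's proof is terser---it simply cites standard Laplacian facts for item~1 and declares items~2--3 ``rather straightforward to verify''---whereas you spell out the Perron--Frobenius input and the kernel computation explicitly, and you rightly flag the eigenspace-mixing subtlety in Case~(c) that the paper's sketch glosses over.
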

\begin{proof} 
  By the graph Laplacian properties, $\lambda_1=0$ for any graph, and
  $\lambda_2>0$ for any connected graph, i.e., $\lambda_1$ is simple.
  One can verify directly that $L d^{1/2} =0$ and
  $(d^{1/2})^{\rm T}d^{1/2} = e^{\rm T}d = 1$. Thus, the outer-product
  term $d^{1/2}(d^{1/2})^{\rm T}$ is the orthogonal projector onto the
  invariant subspace associated with $\lambda_1$.  Based on this fact,
  it is rather straightforward to verify the claims in Part 2 and Part 3.
\end{proof}

The smallest nonzero eigenvalue $\lambda_2$ of the normalized Laplacian $L$
plays a critical role in the analysis of the maxima of the quadratic function
$R(\, y \!\mid\! \hat{B})$.  It is known as the {\em Fiedler
value}~\cite{fiedler1973,chung1997} in spectral graph theory and literature. It
is also known as the algebraic connectivity index, as it is closely related to
the Cheeger cut ratio in combinatorial graph theory.
When $\lambda_2$ is small, it indicates that there exists a small set of cut edges that, upon removal, decouple the graph into two subgraphs $C_1$ and $C_2$. Such a graph cut is then found in the sign pattern of an associated eigenvector $u_2$, $i\in C_1$ if $u_2(i) > 0$, and $i\in C_2$ otherwise. Only when $\lambda_2$ is simple, the eigenvector $u_2$ is unique, so is the cut.
When the multiplicity of $\lambda_2$ is greater than one, we define the invariant subspace associated with $\lambda_2$ as the {\em Fiedler subspace}. Any vector $u$ in the Fiedler subspace is a maximum of the Rayleigh quotient $y^{\rm T}\hat{B}y$ over the unit sphere. Orthogonal to $d^{1/2}$, $u$ has mixed signs and induces a graph cut, $s = \mbox{\rm sign}(u)$, which we may refer to as a {\em Fiedler cut}.

The $n$-nodes clique graph $K_n$, $n>1$, is a special example of the case (3-a) in \cref{thm:Lhat-Bhat-extreme-eigenpairs}. The Fiedler value $\lambda_2$ of $K_n$ is equal to $n/(n-1)$, greater than $1$.  By \cref{thm:Lhat-Bhat-extreme-eigenpairs}, the maximum eigenvector $u_{\max}$ of $\hat{B}$ is $d^{1/2} = \nicefrac{e}{\sqrt{n}}$, positive on all vertices, the clique is not divided.
Real-world networks tend to be large and sparse. Many of them fall into the case (3-b), $\lambda_2 < 1$, as the average of the nonzero Laplacian eigenvalue is $\mbox{\rm trace}(L)/(n-1) = n/(n-1)$.  By \cref{thm:Lhat-Bhat-extreme-eigenpairs}, such graphs are subject to cuts indiscriminately. We address this concerning issue next.

\section{The resolution limit}
\label{sec:resolution-limit}

\Cref{thm:Lhat-Bhat-extreme-eigenpairs} identifies, from the
spectral perspective, a systematic source responsible for what is
known as the resolution limit with the modularity model
(\ref{eq:Q-original})~\cite{fortunato2007}.
The resolution limit phenomena include systematic errors in cluster
merges or splits. Denote by $G_{\textrm{RoK}(p,q)}$ the graph composed
of $q$ copies of $K_p$ linked by $q$ edges circularly, as illustrated
in \cref{fig:fiedler-cut-plots:rok}. We may refer to such a graph as a
ring of cliques (RoK). {Fortunato and Berthelemy} first noted
in~\cite{fortunato2007} that in the optimum configuration
$\Omega_{\max}$ by the modularity model, the neighbor cliques of
$G_{\textrm{RoK}(p,q)}$ are incorrectly grouped together.
We show in this section that the modularity maximization
(\ref{eq:Q-original}) on a connected graph $G$ with $\lambda_2 <1$ may
systematically lead to a false-positive detection of more than one
clusters/communities. We demonstrate this type of community detection
errors on a few graphs that have homogeneous or nearly homogeneous
connectivity structures. We trace the fault to the fixed spectral
modulation with the eigen-projector $d^{1/2}(d^{1/2})^{\rm T}$.

\begin{figure*}[thbp]
\hspace*{\fill}
  \hspace*{-1em}
\subfloat[][Cube(4)\label{fig:fiedler-cut-plots:cube}]{ \centering
    \includegraphics[height=0.12\textheight]{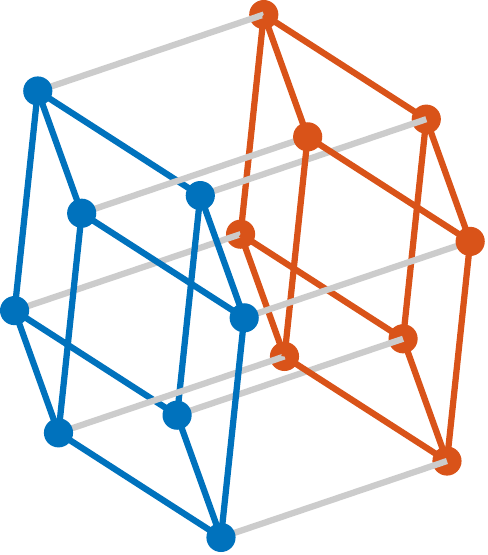}
  }
\hspace*{\fill}
\subfloat[][Buckyball\label{fig:fiedler-cut-plots:buckyball}]{
    \centering
    \includegraphics[height=0.11\textheight]{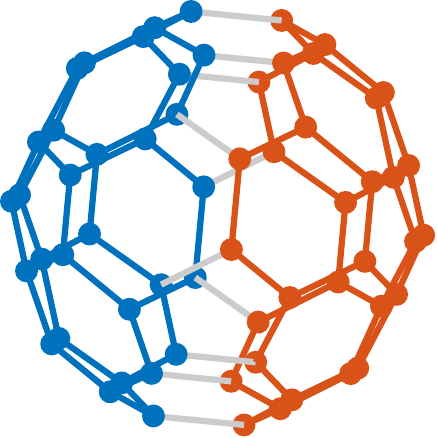}
  } 
\hspace*{\fill}
\subfloat[][$P \boxprod C(10,10)$\label{fig:fiedler-cut-plots:mesh}]{
    \centering
    \includegraphics[height=0.10\textheight]{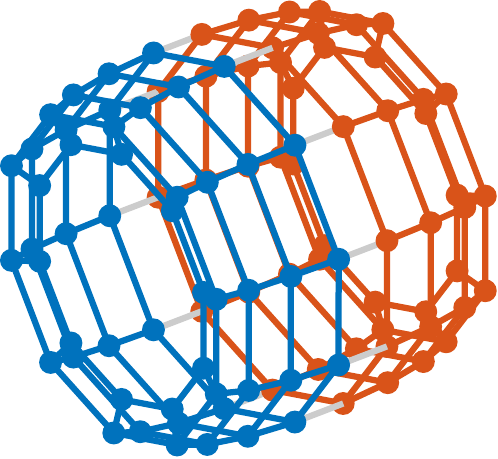}
    }
\hspace*{\fill}
\subfloat[][RoK(6,5)\label{fig:fiedler-cut-plots:rok}]{
    \centering 
    \includegraphics[height=0.10\textheight]{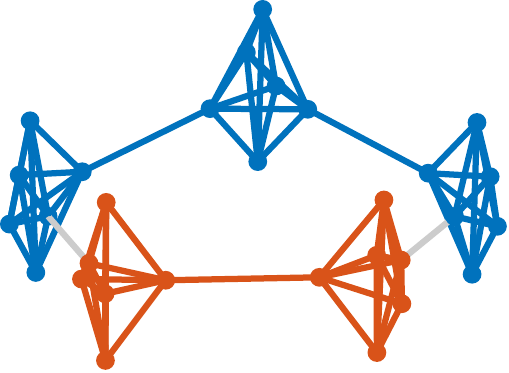}
  }
\hspace*{\fill}
\caption{The Fiedler cuts on four graphs: the 4-dimensional hypercube
    $G_{\textrm{cube}(4)} $ in (a), the buckyball $G_{\rm buckyball}$ in (b),
    the cylindrical mesh $G_{P\boxprod C(10,10)}$ in (c) and the ring of cliques
    $G_{\textrm{RoK}(6,5)}$ in (d).  The nodes on each graph are homogeneous or
    nearly homogeneous.  On each graph, the subgraphs by a Fiedler cut are shown
    in two different colors. The modularity maximization at $\gamma=1$ favors
    the Fiedler cut over no-cut on each of the graphs. Detailed descriptions and
    information on the graphs are in Section~\ref{sec:resolution-limit} and in
    Table~\ref{tab:fps} }
\label{fig:fiedler-cut-plots}
\end{figure*}

{\tt Hypercube $G_{\textrm{cube}(k)}$}, of dimension $k>2$.
The graph has $2^k$ nodes with regular degree $k$, as illustrated in
\cref{fig:fiedler-cut-plots:cube}.
Unequivocally, the hypercube is not to be decoupled. The Fiedler value
$\lambda_2$ of $G_{\textrm{cube}(k)}$ is equal to {$2/k$} with
multiplicity $k$.  By Theorem~\ref{thm:Lhat-Bhat-extreme-eigenpairs},
$\beta_{\max} = 1 - \lambda_2 >0$ and there are many different Fiedler
cuts. Any $u_{\max}$ is in the Fiedler space of dimension $k$ and
induces a cut to the hypercube.  Let $s = {\rm sign}(u_{\max})$. One
can verify that $Q(s \!\mid\!  B ) = s^{\rm T}B s>0$. In particular,
one of the Fiedler cuts leads to two subcubes of dimension $k\!-\!1$,
see a Fiedler cut of $G_{\textrm{cube}(4)}$ illustrated in
\cref{fig:fiedler-cut-plots:cube}.  The modularity value $Q$ at this
cut is $(k\!-\! 2)/(2k)$, greater than the value at the all-in-one
configuration $\Omega_{\vee}$, $Q(\Omega_{\vee})=0$. The modularity
maximization thereby favors any of the Fiedler cuts over no-cut,
falsely.

{\tt Buckyball $G_{\textrm{buckyball}}$\, (Molecule $C_{60}$)}.  The
 graph has $60$ nodes with regular degree $3$, as shown in
 \cref{fig:fiedler-cut-plots:buckyball}. The Fiedler value $\lambda_2$
 of the buckyball is of multiplicity $3$, analytically. Numerically,
 $\lambda_2 \approx 0.081$, much smaller than $1$.  Thus, any maximum
 eigenvector $u_{\max}$ of $\hat{B}$ is in the Fiedler space of
 dimension $3$, and induces a cut to the Buckyball graph. The
 modularity value at such a cut is 0.39, greater than zero at
 $\Omega_{\vee}$.

{\tt Cylindrical mesh $G_{P\boxprod C(p,q)}$.}  The mesh is the
Cartesian product of the $p$-nodes path graph and the $q$-nodes cycle
graph, i.e., $G_{P\boxprod C(p,q)} = P_{p} \boxprod C_{q}$.  Although
its nodes have nearly homogeneous connection structures, the mesh is
subject to a cut by the modularity maximization because $\lambda_2 <
1$.  The cut is unique as $\lambda_2$ is simple. The cylindrical mesh
$G_{P\boxprod C(10,10)}$ with the cut is illustrated in
\cref{fig:fiedler-cut-plots:mesh}.  For $G_{P\boxprod C(10,10)}$,
$\lambda_2 \approx 0.027$, much smaller than $1$.  The modularity
$Q$-value at the Fiedler cut is 0.447, greater than zero at
$\Omega_{\vee}$.

\section{The critical transition value}
\label{sec:fiedler-transition}

Among other approaches to overcoming the resolution limit, Reichardt
and Bornholdt introduced the resolution parameter
$\gamma$~\cite{reichardt2006}, using the expression
(\ref{eq:Q-intra-inter-terms}),
\begin{equation}
  \label{eq:Q-ngrb-gamma}
  Q(\Omega \mid \gamma) = \sum_{C\in \Omega}
  \alpha(C,C) - \gamma \, \alpha^{2}(C,V),
  \quad \quad \gamma > 0. 
\end{equation}
This $\gamma$-parametrized model is known as the generalized
modularity model. Roughly speaking, a smaller $\gamma$ value promotes
coarser clusters, whereas a larger $\gamma$ value advocates finer
clusters.

For cluster configurations with at most two clusters, the parametrized
model can be written in quadratic forms similar to those in
(\ref{eq:Bhat-definition}) and (\ref{eq:Bhat-Lhat-connection}). We
parametrize the prime and standard modularity matrices $B$ and
$\hat{B}$, respectively, as follows,
\begin{equation} 
  \label{eq:B-Bhat-gamma}
\begin{aligned} 
B(\gamma) & =   A - \gamma \, d\, d^{\rm T},
\\ 
\hat{B}(\gamma) & =  D^{-1/2}B(\gamma) D^{-1/2}
\\
&= (I - L) - \gamma\, d^{1/2}\, (d^{1/2})^{\rm T},
\end{aligned}
\end{equation}
where $L$ is the normalized graph Laplacian.
Using the same notation and derivation as in the preceding sections,
we have
\begin{equation}
  \label{eq:Q-B-gamma}
  \begin{aligned}   
  Q(s \!\mid\! B(\gamma))
  & = \frac{1}{2}
     \left( s^{\rm T} B(\gamma) \, s + (1-\gamma) \right), 
  \\
  & \phantom{xxx} s \in \{1,-1\}^{n}.
  \end{aligned} 
\end{equation}
and 
\begin{equation}
  \label{eq:Q-Bhat-gamma}
  \begin{aligned} 
  2\, Q( \, y(s) \!\mid\! \hat{B}(\gamma))
  &
  = y(s)^{\rm T} \hat{B}(\gamma) \, y(s) + (1-\gamma),
  \\ 
  y(s) &  = D^{1/2}s,
  \quad s \in \{1,-1\}^{n}.
\end{aligned}
\end{equation}
We provide in \cref{sec:appendix-proofs} a detailed proof of
(\ref{eq:Q-B-gamma}).
As $y(s)$ is fittingly on the unit sphere $S$, we extend the discrete
support of $Q$ to $S$,
\begin{equation}
  \label{eq:Rayleigh-Bhat-gamma}
  \begin{aligned}
    u_{\max} & =
    \arg \max_{u \in S} R( \, u \!\mid\! \hat{B}(\gamma)),
    \\
    & \phantom{xx}
    R( \, u \!\mid\! \hat{B}(\gamma))
  = 
  u^{\rm T} \hat{B}(\gamma) \, u.
\end{aligned}
\end{equation}

We generalize Theorem~\ref{thm:Lhat-Bhat-extreme-eigenpairs} from the
particular case of
$\gamma=1$ to arbitrary values of the resolution parameter $\gamma$.
\begin{theorem}
  \label{thm:Lhat-Bhat-across-gamma}
  Let $\hat{B}(\gamma)$ be the $\gamma$-parametrized standard
  modularity matrix as defined in (\ref{eq:B-Bhat-gamma}),
  $\gamma > 0$. Let $\lambda_{i}$ and $\beta_{\max}$ be specified as
  in Theorem~\ref{thm:Lhat-Bhat-extreme-eigenpairs}.
Then, the following hold true.
\setlength{\leftmargini}{2pt} 
  \begin{enumerate}
  \item The eigenvalues of $\hat{B}(\gamma)$ are
    $\{ 1 -\gamma \} \cup \{1\!-\!\lambda_i, i=2,\dots, n\}$.  The
    corresponding eigenvectors of $\hat{B}(\gamma)$ are the Laplacian
    eigenvectors associated with $\lambda_i$, $i=1,2,\!\dots,\! n$,
    respectively, and therefore $\gamma$-independent.
\item The largest eigenvalue of $\hat{B}(\gamma)$ depends on $\gamma$,
  \begin{equation}
    \label{eq:Bhat-beta-max}
    \beta_{\max} = \max\{ 1-\gamma, 1-\lambda_2\}. 
  \end{equation}
Any eigenvector $u_{\max}$ associated with $\beta_{\max}$ is
  characterized as follows,
\begin{enumerate}\item Case $\gamma < \lambda_2$: $\beta_{\max} = 1 - \gamma$,
    $u_{\max} = d^{1/2} > 0 $, and $u_{\max}$ is unique.
  \item Case $\gamma > \lambda_2$: $\beta_{\max} = 1 -
    \lambda_2$, $u_{\max}$ is orthogonal to $d^{1/2}$,
    $e^{\rm T}d^{1/2} = 0$, and specifically,
    $u_{\max} \in {\rm span}\{u \mid Lu = \lambda_2
    u\}$.  The maximum vector is unique if and only if
    $\lambda_2$ is simple.
  \item Case $\gamma = \lambda_2$:
    $\beta_{\max} = 1-\lambda_2=1-\gamma$, and
    $u_{\max} \in {\rm span}\{d^{1/2}, u\mid Lu = \lambda_2 u
    \}$, i.e., $u_{\max}$ is in the invariant subspace of $L$
    associated with $\lambda_1$ and $\lambda_2$. The elements of
    $u_{\max}$ either have the same sign or have mixed signs.
  \end{enumerate}
\end{enumerate}
\end{theorem}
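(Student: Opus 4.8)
The plan is to exploit the fact, already established in the proof of \cref{thm:Lhat-Bhat-extreme-eigenpairs}, that $d^{1/2}$ is the unique normalized Laplacian eigenvector for $\lambda_1=0$ and that $P \triangleq d^{1/2}(d^{1/2})^{\rm T}$ is the orthogonal projector onto $\mathrm{span}\{d^{1/2}\}$, the null space of $L$. Writing $\hat{B}(\gamma) = (I-L) - \gamma P$ from (\ref{eq:B-Bhat-gamma}), the key observation is that $P$ commutes with $L$ (indeed $LP = PL = 0$, since $Ld^{1/2}=0$ and $L$ is symmetric), so $L$ and $\hat{B}(\gamma)$ share a common orthonormal eigenbasis and it suffices to read off the action of $\hat{B}(\gamma)$ on each Laplacian eigenvector. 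Everything else is the bookkeeping of which eigenvalue is largest.

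First I would fix an orthonormal eigenbasis $\{u_1=d^{1/2}, u_2,\dots,u_n\}$ of $L$ with $Lu_i=\lambda_i u_i$. For $i=1$: $(I-L)u_1=u_1$ and $Pu_1=u_1$, hence $\hat{B}(\gamma)u_1=(1-\gamma)u_1$. For $i\geq 2$: $(I-L)u_i=(1-\lambda_i)u_i$ and $Pu_i=0$ by orthogonality of distinct $L$-eigenvectors, hence $\hat{B}(\gamma)u_i=(1-\lambda_i)u_i$. This immediately gives the eigenvalue list $\{1-\gamma\}\cup\{1-\lambda_i : i=2,\dots,n\}$ with the $\gamma$-independent Laplacian eigenvectors, which is Part 1, and taking the maximum while using $\lambda_2=\min_{i\geq 2}\lambda_i$ yields $\beta_{\max}=\max\{1-\gamma,\,1-\lambda_2\}$, which is (\ref{eq:Bhat-beta-max}).

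Then I would carry out the three-way case split for the maximizing eigenspace. If $\gamma<\lambda_2$, then $1-\gamma>1-\lambda_i$ \emph{strictly} for every $i\geq 2$, so $\beta_{\max}=1-\gamma$ is attained only on $\mathrm{span}\{d^{1/2}\}$; since $G$ is connected, $d>0$, so $u_{\max}=d^{1/2}>0$ is unique up to sign. If $\gamma>\lambda_2$, then $1-\lambda_2>1-\gamma$ strictly, so $\beta_{\max}=1-\lambda_2$ is attained exactly on the Fiedler subspace $\mathrm{span}\{u : Lu=\lambda_2 u\}$, which is orthogonal to $d^{1/2}$; this subspace is one-dimensional, hence $u_{\max}$ unique, precisely when $\lambda_2$ is simple. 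If $\gamma=\lambda_2$, the two branches coalesce at $1-\lambda_2=1-\gamma$, so the maximizing eigenspace is $\mathrm{span}\{d^{1/2}\}\oplus\mathrm{span}\{u : Lu=\lambda_2 u\}$, i.e.\ the $L$-invariant subspace associated with $\lambda_1$ and $\lambda_2$; a generic member $c\,d^{1/2}+w$ with $w$ in the Fiedler subspace is all-positive when $|c|$ is large enough to dominate $w$ and has mixed signs otherwise, which is the stated dichotomy.

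I do not anticipate a genuinely hard step: the content is inherited entirely from the rank-one, eigenvector-aligned structure of the perturbation $-\gamma P$. The only place requiring care is the strictness of the inequalities at the case boundaries — in particular, confirming that when $\gamma<\lambda_2$ the value $1-\gamma$ is strictly larger than every $1-\lambda_i$ (so $d^{1/2}$ is genuinely isolated as the maximizer), and treating the coalescence at $\gamma=\lambda_2$ without double-counting dimensions — together with noting that the orthogonality of $u_{\max}$ to $d^{1/2}$ in the case $\gamma>\lambda_2$ is just the orthogonality of Laplacian eigenvectors for $\lambda_2\neq\lambda_1$.
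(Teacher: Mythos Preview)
Your proposal is correct and follows exactly the approach the paper intends: the paper does not give a separate proof of \cref{thm:Lhat-Bhat-across-gamma} at all, treating it as the evident $\gamma$-generalization of \cref{thm:Lhat-Bhat-extreme-eigenpairs}, whose proof in turn reduces to the single observation that $d^{1/2}(d^{1/2})^{\rm T}$ is the orthogonal projector onto the null space of $L$ and then declares the rest ``rather straightforward to verify.'' Your argument is precisely that straightforward verification, spelled out in full, and is if anything more careful than the paper about the strictness of the case boundaries.
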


We revisit the graphs $G_{\rm cube(k)}$, $G_{\rm buckyball}$ and
$G_{P\boxprod C(p,q)}$, which are described in
\cref{sec:resolution-limit} and illustrated in \cref{fig:fiedler-cut-plots}.
For each of the graphs, the Fiedler value $\lambda_2$ is less than
$1$, and by \cref{thm:Lhat-Bhat-across-gamma}, the $\gamma$-value
shall be set less than $\lambda_2$.  At the all-in-one configuration
$\Omega_{\vee}$, we have $Q(\Omega_{\vee} \mid \gamma) = 1 - \gamma$,
which is equal to $\beta_{\max}$ when and only when
$\gamma \leq \lambda_2$.
For each graph, by setting $\gamma$ smaller than $\lambda_2$, the
graph is kept in $\Omega_{\max}$ as a whole as expected.
We conclude that the fault with the original modularity model lies in
the fixed setting of $\gamma$ to $1$, oblivious and non-adaptive to
graph structures.

Theorem~\ref{thm:Lhat-Bhat-across-gamma} leads to the following
revelation.
\begin{theorem}[The critical transition at the Fiedler value]
  \label{thm:Fiedler-critical-transition}
For any undirected, connected graph $G$, the Fiedler value
  $\lambda_2$ is the critical value of the resolution parameter
  $\gamma$ in the following sense: when $\gamma < \lambda_2$, the
  maximum eigenvector of $\hat{B}(\gamma)$ is unique, its elements are
  of the same sign, not inducing any graph division; when
  $\gamma > \lambda_2$, the elements of any maximum eigenvector of
  $\hat{B}(\gamma)$ have mixed signs, inducing a graph split.
\end{theorem}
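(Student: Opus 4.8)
The plan is to obtain \cref{thm:Fiedler-critical-transition} as a direct corollary of \cref{thm:Lhat-Bhat-across-gamma}, by examining the two regimes $\gamma<\lambda_2$ and $\gamma>\lambda_2$ separately and by translating the eigenvector characterization in part~2 of that theorem into a statement about sign patterns. Since the boundary case $\gamma=\lambda_2$ is deliberately excluded from the claim, only the two open regimes need be treated.

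First, for $\gamma<\lambda_2$, I would invoke part~2(a) of \cref{thm:Lhat-Bhat-across-gamma}: the largest eigenvalue of $\hat{B}(\gamma)$ equals $1-\gamma$, it is simple, and its (positive) eigenvector is $u_{\max}=d^{1/2}$. Because $G$ is connected, $d>0$, hence $d^{1/2}>0$ entrywise, so every component of $u_{\max}$ is strictly positive and $\mathrm{sign}(u_{\max})=e=s(\Omega_{\vee})$. This yields both parts of the theorem in this regime: the maximizing eigenvector is unique (up to a global sign), and its uniform sign pattern corresponds to the all-in-one configuration, i.e.\ no division is induced.

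Next, for $\gamma>\lambda_2$, I would use part~2(b): every maximizer $u_{\max}$ of the Rayleigh quotient $R(u\mid\hat{B}(\gamma))$ lies in the Fiedler subspace and satisfies $(d^{1/2})^{\rm T}u_{\max}=\sum_i\sqrt{d(i)}\,u_{\max}(i)=0$. The one elementary observation needed is that a nonzero vector whose weighted sum with strictly positive weights vanishes must have at least one strictly positive and at least one strictly negative entry; hence $u_{\max}$ has mixed signs, and $s=\mathrm{sign}(u_{\max})$ splits $V$ into two nonempty blocks $C_1=\{i:u_{\max}(i)>0\}$ and $C_2=V\setminus C_1$, a genuine Fiedler cut. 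Combining the two regimes shows that $\lambda_2$ is precisely the value at which the maximizing eigenvector of $\hat{B}(\gamma)$ departs from the ray through $d^{1/2}$ (the undivided state) and enters the Fiedler subspace (a divided state), which is the asserted dichotomy.

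I do not expect a real obstacle here, only bookkeeping. One should fix a convention for $\mathrm{sign}(0)$ so that $s\in\{1,-1\}^n$ is well defined, but this choice is immaterial since the mixed-sign argument already forces both $C_1$ and $C_2$ to be nonempty however ties are resolved. The only non-immediate point — and where I would spend the effort if one wanted to strengthen ``inducing a graph split'' into the combinatorial statement that the split strictly raises the modularity over $\Omega_{\vee}$, as checked for the explicit graphs in \cref{sec:resolution-limit} — is to bridge the gap between the continuous optimum $u_{\max}$ on the unit sphere $S$ and the discrete optimum over the set $Y$, using the identity (\ref{eq:Q-Bhat-gamma}) together with the strict inequality $\beta_{\max}=1-\lambda_2>1-\gamma=R(d^{1/2}\mid\hat{B}(\gamma))$ valid when $\gamma>\lambda_2$. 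As phrased, however, the theorem is a purely spectral statement and follows at once from \cref{thm:Lhat-Bhat-across-gamma}.
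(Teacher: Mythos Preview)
Your proposal is correct and matches the paper's approach: the paper presents \cref{thm:Fiedler-critical-transition} as an immediate corollary of \cref{thm:Lhat-Bhat-across-gamma} (introduced with ``Theorem~\ref{thm:Lhat-Bhat-across-gamma} leads to the following revelation'') and gives no separate proof. Your write-up simply spells out the two cases 2(a) and 2(b) and the elementary observation that orthogonality to the strictly positive vector $d^{1/2}$ forces mixed signs, which is exactly the intended reading.
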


A few remarks are in order. For graph partitions with the minimum set
of cut edges, the Fiedler value is a measure of the min-cut edge set
relative to the volumes of the divided subgraphs. For graph
clustering/community detection, by
\cref{thm:Fiedler-critical-transition}, the Fiedler value assumes an
additional important role as the critical value of the resolution
parameter $\gamma$ on its effect on the state of the optimum
configuration $\Omega_{\max}(\gamma)$: with or without a graph
division.
The modularity spectral theory summarized in
\cref{thm:Lhat-Bhat-across-gamma,thm:Fiedler-critical-transition}
advances our understanding of the resolution parameter $\gamma$ in the
model (\ref{eq:Q-ngrb-gamma}). At the forefront, $\gamma$ regulates
the policies regarding graph splits or merges. In the backend, the
resolution parameter effectively modulates the maximum spectral value
of $\hat{B}(\gamma)$.

The first two preponderant implications of the modularity spectral
theory are immediate.
\begin{inparaenum}[(i)]
\item The $\gamma$-parametrized model at any fixed value of $\gamma$,
  $\gamma \neq 1$, inherits the very same flaw as in the case of
  $\gamma= 1$: the maximum vectors of $\hat{B}(\gamma)$
  indiscriminately induce Fiedler cuts on any graph with the Fiedler
  value $\lambda_2$ less than $\gamma$.
\item If $\gamma$ is set greater than $2$, then a Fiedler cut is
  induced on any graph. Consider, for example, any clique $K_n$,
  including the edge graph $K_2$. The modularity $Q$ at $\gamma = 2$ 
  has a higher value at a Fiedler cut than the value at $\Omega_{\vee}$ 
  without graph split. 
\end{inparaenum}
We will discuss in \cref{sec:additional-implications}
additional implications, some of which are subtle.

\section{The critical transition region}
\label{sec:FPS-transition}

\begin{table*}[thbp]
  \centering
\caption{The Fiedler value $\lambda_2$ and the Fiedler pseudo-set (FPS) descriptor
    $(t_{\min}, t_{\max})$ for each of the $4$ small and structured graphs,
    Zachary's karate club~\cite{zachary1977}, and $2$ larger and random graphs
    on the table. The topological structures of the first four graphs are
    illustrated in \cref{fig:fiedler-cut-plots}. Two types of
    perturbations---edge removal ($r_e$) and edge rewiring ($r_w$)---at two
    levels are considered. At the first level, one edge is removed from or
    rewired on each small graph; $1\%$ of the edges, on each large graph.  At
    the second perturbation level, two edges are removed from or rewired on each
    small graph; $2\%$ of the edges, on each large graph.  For each case, the
    FPS is estimated by \num{100} trials. See more detailed descriptions in
    \cref{sec:FPS-transition}. }
\label{tab:fps}
\renewcommand{\arraystretch}{1.2}
\begin{tabular}{
      l
      @{\hskip 2.5em}
      l
      @{\hskip 2.5em}
      S[table-format=1.3, table-number-alignment = right]
      S[table-format=1.3, table-number-alignment = right]
      @{\hskip 2.5em}
c
      c
c
      c
c
      c
c
      c
      }
      \toprule
&  &  &
        & \multicolumn{2}{c}{ $r_{w}=1$ } 
        & \multicolumn{2}{c}{ $r_{e}=1$ } 
        & \multicolumn{2}{c}{ $r_{w}=2$ } 
        & \multicolumn{2}{c}{ $r_{e}=2$ } 
      \\
      \cmidrule(lr){5-6} \cmidrule(lr){7-8} \cmidrule(lr){9-10} \cmidrule(lr){11-12}
      \multicolumn{2}{l}{Graph} 
        & {$\lambda_{2}$} & {$\lambda_{3}$} 
        & {$t_{\min}$} & {$t_{\max}$}
        & {$t_{\min}$} & {$t_{\max}$}
        & {$t_{\min}$} & {$t_{\max}$}
        & {$t_{\min}$} & {$t_{\max}$}
      \\
\midrule
      Cylindrical mesh & $P \boxprod C(33,3)$ & 0.002 & 0.009 & $\nicefrac{1}{1.01}$ & $\nicefrac{3.99}{1}$ & $\nicefrac{1}{1.05}$ & $\nicefrac{1.01}{1}$ & $\nicefrac{1}{1.03}$ & $\nicefrac{5.72}{1}$ & $\nicefrac{1}{1.09}$ & $\nicefrac{1.02}{1}$ 
      \\
      Ring of cliques & $\textrm{RoK}(6,17)$ & 0.003 & 0.003 & $\nicefrac{1}{3.85}$ & $\nicefrac{1}{1.00}$ & $\nicefrac{1}{3.95}$ & $\nicefrac{1.00}{1}$ & $\nicefrac{1}{3.91}$ & $\nicefrac{2.41}{1}$ & $\nicefrac{1}{\infty}$ & $\nicefrac{1.01}{1}$ 
      \\
      Buckyball & $(n = 60\phantom{6}, m = 90)$ & 0.081 & 0.081 & $\nicefrac{1}{1.16}$ & $\nicefrac{1}{1.01}$ & $\nicefrac{1}{1.14}$ & $\nicefrac{1}{1.10}$ & $\nicefrac{1}{1.41}$ & $\nicefrac{1}{1.01}$ & $\nicefrac{1}{1.37}$ & $\nicefrac{1}{1.09}$ 
      \\
      Hypercube & $\textrm{Cube}(8)$ & 0.250 & 0.250 & $\nicefrac{1}{1.01}$ & $\nicefrac{1}{1.01}$ & $\nicefrac{1}{1.01}$ & $\nicefrac{1}{1.01}$ & $\nicefrac{1}{1.02}$ & $\nicefrac{1}{1.01}$ & $\nicefrac{1}{1.02}$ & $\nicefrac{1}{1.01}$ 
      \\
      \midrule[0.1pt]
      Zachary's karate club & $(n = 34\phantom{6}, m = 78)$ & 0.132 & 0.287 & $\nicefrac{1}{1.11}$ & $\nicefrac{1.30}{1}$ & $\nicefrac{1}{1.18}$ & $\nicefrac{1.06}{1}$ & $\nicefrac{1}{1.20}$ & $\nicefrac{1.46}{1}$ & $\nicefrac{1}{1.18}$ & $\nicefrac{1.07}{1}$ 
      \\
      \midrule[0.2pt]
        &  &  &
        & \multicolumn{2}{c}{ $r_{w}=1\%$ } 
        & \multicolumn{2}{c}{ $r_{e}=1\%$ } 
        & \multicolumn{2}{c}{ $r_{w}=2\%$ } 
        & \multicolumn{2}{c}{ $r_{e}=2\%$ } 
      \\
      \cmidrule(lr){5-6} \cmidrule(lr){7-8} \cmidrule(lr){9-10} \cmidrule(lr){11-12}
      Erd\"{o}s-R\'{e}nyi & $\textrm{ER}(n = 2000, \,\, d_{\textrm{avg}} = 16)$ & 0.519 & 0.523 & $\nicefrac{1}{1.00}$ & $\nicefrac{1.00}{1}$ & $\nicefrac{1}{1.01}$ & $\nicefrac{1}{1.00}$ & $\nicefrac{1}{1.01}$ & $\nicefrac{1.01}{1}$ & $\nicefrac{1}{1.01}$ & $\nicefrac{1}{1.01}$ 
      \\
      Barab\'{a}si-Albert & $\textrm{BA}(n = 2000, \,\, d_{\textrm{avg}} = 16)$ & 0.529 & 0.531 & $\nicefrac{1}{1.01}$ & $\nicefrac{1.00}{1}$ & $\nicefrac{1}{1.01}$ & $\nicefrac{1}{1.00}$ & $\nicefrac{1}{1.01}$ & $\nicefrac{1.00}{1}$ & $\nicefrac{1}{1.02}$ & $\nicefrac{1}{1.00}$ 
      \\
      \bottomrule
    \end{tabular}
\end{table*}

 If the Fiedler value $\lambda_2$ is relatively smaller than
 $\mbox{\rm trace}(L)/(n-1)$, which is the average of the non-zero
 Laplacian, it signifies a meaningful graph partition, or indicates a
 possible cluster configuration with more than one clusters.
There is, however, a crucial distinction between graph partition and
 graph clustering: a small value of $\lambda_2$ is insufficient to
 ascertain the presence of more than one clusters/communities.
For a case in point, consider the RoK graph $G_{\textrm{RoK}(6,17)}$
 and the cylindrical mesh $G_{P\boxprod C(33,3)}$, both are shown in
 \cref{fig:fiedler-cut-plots}. The two graphs have about the same
 number of nodes, and their Fiedler values are small and comparable,
 see \cref{tab:fps}.  Despite having a smaller Fiedler value, the
 cylindrical mesh shall not be split, whereas each clique in the RoK
 graph is seen as a closely tied community. The Fiedler values alone
 are insufficient to spell out the difference in connectivity between
 these two graphs.
   
 We define the Fiedler pseudo-set~(FPS) of a graph $G$, with respect
 to a tolerance threshold $\varepsilon >0$, as follows,
\begin{equation}
    \label{eq:psuedo-Fiedler-set}
    \Lambda_2(A, \varepsilon)
    = \big\{\, \lambda_2 \!\mid  L(A+E) u = \lambda_2 u,
    \, \| E\| \leq \varepsilon \|A\| \, \big\},   
  \end{equation}
where $L$ is the normalized graph Laplacian, $E$ represents a
  perturbation or change in $A$, and $\|\cdot\|$ is a matrix norm.
  One may impose certain constraints on the perturbation matrix $E$,
  such as by a probabilistic model, or by domain-specific knowledge.
The FPS measures the variation in the Fiedler value, i.e., the network
  connectivity, in response to small changes in a graph due to random
  perturbation, editing, or rewiring.

  Graph $G$ is said $\varepsilon$-stable in connectivity if the
  dispersion of $\lambda_2$ in $\Lambda_2$ is relatively small.  When
  $\lambda_2$ is small, an $\varepsilon$-stable graph is resistant
  to division.  Among many possible ways to measure the sensitivity of
  a network connectivity, we may simply use the pair of ratios
  $t_{\min}=\min(\Lambda_2)/\lambda_2$ and
  $t_{\max} = \max(\Lambda_2)/\lambda_2$.  If $\lambda_2$ is small
  and $\min(\Lambda_2)$ is much smaller, then the graph tends to break
  apart.

  By the measure of connection sensitivity, we are able to
  definitively and quantitatively differentiate the RoK graph
  $G_{\textrm{RoK}(6,17)}$ and the cylindrical mesh
  $G_{P\boxprod C(33,3)}$ mentioned above.  We show in \cref{tab:fps}
  that for the RoK graph, $\min(\Lambda_2)$ becomes $4$ times smaller
  when one edge is removed or rewired, and it becomes zero when two
  edges are removed or rewired.  This reveals the tendency of the RoK
  graph to break apart.
In contrast, the cylindrical mesh is stable while undergoing similar
  changes, the ratio $t_{\min}$ is close to $1$, signifying a strong
  bond within the graph to stay as a whole.
Based on this analysis, we reach conclusions that are in line with
  our heuristic understanding of these graphs. We may say that
  the sensitivity analysis gives foundational support to our
  intuitive understanding of the community structure on a RoK graph.
  The FPS provides reliable information, which cannot be contained in
  a single Fiedler value, for making robust decisions in a community
  detection process.

  The FPS has broader theoretical and practical merits.

  \begin{inparaenum}[(i)]
\item The FPS is the connectivity characteristic for complex network.
  It acknowledges and regards the typical co-existence of structures and
  randomness in a complex network and measures their interplay.  The
  randomness in a complex network defies connectivity description by
  the Fiedler value at a single random instantiation.
  
\item In the context of the $\gamma$-setting with the modularity
  model, the FPS is the {\em critical region}, in place of the single
  value $\lambda_2$ in \cref{thm:Fiedler-critical-transition}. The larger the
  dispersion in $\Lambda_2$ from $\lambda_2$ is, the more uncertainty
  is in the transition between division and no-division configuration
  states.
This critical region transcends the critical transition analysis in
  \cref{sec:fiedler-transition}.

\item The FPS can be used to assess the robustness of an inferred
  community configuration, in combination with other existing
  measures~\cite{silva2022}.

\item When a complex network is stable in connectivity, by the FPS,
  its community structure can be inferred reliably from a particular
  instantiation. This includes the analysis of subnetwork structures
  in \cref{sec:subgraph-division}.
  
\item The structure of the subspace associated with the FPS can be
  used to explain or help resolve some other ambiguity or uncertainty
  issues related to graph clustering/community detection. We describe
  such a scenario in \cref{sec:additional-implications}.

\end{inparaenum}

We show in \cref{tab:fps} the Fiedler values and the Fiedler pseudo-sets for
several graphs, including a Barab\'{a}si-Albert (BA) graph~\cite{barabasi1999}
and an Erd\"{o}s-R\'{e}nyi (ER) graph~\cite{erdos1959,bollobas2001}, each with
$2000$ nodes and degree $16$ on average. The Fiedler values for the BA graph and
the ER graph are far from small, and they do not disperse much in their Fiedler
pseudo-sets. These findings align well with the existing understanding that BA
and ER graphs typically lack clear and distinct community structures.  Except
for the RoK graph, the other graphs on \cref{tab:fps} are also stable in
connectivity when subject to similar changes.

For the experimental results in \cref{tab:fps}, the Fiedler pseudo-set
$\Lambda_2(\varepsilon)$ for each graph is estimated by
$\min\{m,100\}$ random trials of edge removal or rewiring, $m$ is the
total number of edges on the graph.

\section{The sub-modularity matrix}
\label{sec:subgraph-division}

Many search algorithms for modularity maximization leverage a
highly attractive property, the separability, of the function
$Q(\Omega \mid \gamma)$ at any fixed value of the resolution parameter
$\gamma$~\cite{liu2021,floros2022}.
Let $V_1$ be a subset of vertices.  Let $H$ be the subgraph induced by $V_1$.
Denote by ${V}_2$ the complement of $V_1$ in $V$.  Let $H^{c}$ be the subgraph
induced by $V_2$.
At some algorithmic steps, the search operations can be restricted to
$H$ while the rest of the graph is fixed or undergoes changes
independently:
\begin{equation}
  \label{eq:Q-separability}
\begin{aligned} 
    Q( \Omega \mid \gamma )
    & = 
  \sum_{C_i\subset \Omega(V_1) }\alpha(C_i,C_i)
   - \gamma \, \alpha^{2}(C_i,V)
   \\
   & \phantom{xx}  + 
  \sum_{C_j \subset \Omega(V_2)  }
  \alpha(C_j,C_j) - \gamma \, \alpha^{2}(C_j,V).
\end{aligned}
\end{equation}

We introduce the augmented graph $\bar{H}$ obtained from $G$ by
joining $H$ and a graph minor of $H^{c}$. Suppose $H^{c}$ has a
cluster structure $\{C_j\}$, each cluster $C_j$ is connected. By
contracting each $C_j$ into a single node, we get a graph minor of
$H^{c}$. We detail the basic case that the entire $H^{c}$ is
contracted to a single node.  For clarity, we describe the augmented
graph $\bar{H}$ by its adjacency matrix $\bar{A}_1$,
\begin{equation}
 \label{eq:A-augmented}
  \bar{A}_1 = \left(
    \begin{array}{cc}
      A(V_1,V_1) & A(V_1,V_2)\, e   
      \\
      e^{\rm T}A( V_2, V_1) & 0 \end{array} \right)
\end{equation}
The leading matrix $A(V_1,V_1)$ is the adjacency matrix of the subgraph $H$.
The last column (row) of $\bar{A}_1$ is associated with the augmented node
$\bar{v}$, which represents the existence of all other nodes external to $H$.
The number of the nonzero elements in the column is the number of edges between
$V_1$ and $\bar{v}$.  For any $v_i\in V_1$, $A(v_i,V_2)e $ is the weight on the
edge between $v_1$ and $\bar{v}$, equal to the sum of the connection weights
between node $v_i$ and all nodes in $V_{2}$. For the extreme case that $H$ has a
single vertex $v_i$, the extended graph $\bar{H}$ is the edge graph with $v_i$
and $\bar{v}$, the edge weight is equal to $d(v_i)$. For the other extreme case
that $H$ has $(n-1)$ nodes, $\bar{A}_1=A$, i.e., the augmented graph is $G$.

When graph $H^{c}$ has a configuration of $k$ clusters $\{C_j, j=1,\dots,k\}$,
$k>1$, it can be represented by a $k$-nodes graph minor obtained by
cluster contraction. Then, graph $H$ is augmented to $\bar{H}$ by $k$
external nodes/graph minor nodes.  The external nodes are intra-linked
and weighted by the edges in the graph minor, and they are linked to
the nodes in $V_1$ by the interconnections $\alpha(v_i, C_j)$,
$v_i \in V_1$ and $C_j$ is a cluster of $H^{c}$.

The augmented graph $\bar{H}$ has the following properties.
\begin{inparaenum}[(i)]
\item It preserves the interconnection between $H$ and $H^{c}$.  As long as $G$
  is connected, the augmented graph $\bar{H}$ is connected, regardless of the
  granularity of a graph minor for representing $H^{c}$.
\item It also preserves the probabilistic transition among the nodes
  in $V_1$.  For any vertex $v_1$ in $V_1$, its volume/degree $d(v_i)$
  on $G$ is preserved on $\bar{H}$. Denote by $\bar{d}_1$ the degree
  vector on $\bar{H}$, $ \bar{d}_1 = \bar{A}_1e$.  Then,
  $\bar{d}_1(v_i) = d(v_i)$, $v_i \in V_1$. Let
  $\bar{D}_1 = \mbox{\rm diag}(\bar{d}_1 )$.  By the comparison
  between the two stochastic matrices $\bar{A}_1\bar{D}_{1}^{-1}$ and
  $AD^{-1}$, the probabilistic transition from node $v_i$ in $V_1$ to
  any other node $v_j$ in $V_1$ on $\bar{H}$ remains the same as that
  on $G$.
\item We can apply the $\gamma$-parametrized modularity model to the
  augmented graph $\bar{H}$. By the separability of the modularity
  function (\ref{eq:Q-separability}) and the properties of $\bar{H}$,
  at any value of $\gamma$ ($\bar{H}$ is independent of $\gamma$), an
  ascending/descending of $Q$ on $\bar{H}$ implies directly the same
  amount of ascending/descending of $Q$ on $G$.
If the maximal $Q$-value is at the no-partition of $\bar{H}$ or at
  the partition between $V_{1}$ and the augmented nodes, then no
  independent division of $H$ improves the modularity value, which
  implies a merge decision if $H$ results from a union of more than
  one inter-connected subgraphs.
\end{inparaenum}

We are in a position to describe the $\gamma$-parametrized modularity
matrix $\hat{B}$ for $\bar{H}$,
\begin{equation}
  \label{eq:Bhat-on-Abar}
  \hat{B}_{1}\left(\gamma \right)
  = \left(I - L(\bar{H}) \right)
  - \frac{\gamma}{e^{\rm T}\bar{d}_{1}}\, 
  \bar{d}_{1}^{1/2} \left( \bar{d}_{1}^{1/2} \right)^{\rm T},
\end{equation}
where $\bar{d}_{1} = \bar{A}_{1}e$ is the degree vector on $\bar{H}$,
$L(\bar{H})$ is the normalized Laplacian of graph $\bar{H}$.  The
formulation of (\ref{eq:Bhat-on-Abar}) is computation-friendly,
scaling the degree vector only without scaling the entire matrix
$\bar{A}_{1}$, because the Laplacian is invariant to scaling in
$\bar{A}_1$.

The theory on the spectral structure of the parametrized modularity
matrix $\hat{B}(\gamma)$ holds true for the augmented graph $\bar{H}$.
Furthermore, all the above applies to any subgraph of $\bar{H}$,
recursively, in adaption to the graph structure.  This effectively
extends the modularity spectral analysis to multi-community detection.

Additional comments.
\begin{inparaenum}[(1)]
\item The augmented nodes for external connections are necessary.
  Consider, for example, graph $G_{\rm 2stars}$, which has a single
  edge connecting two copies of a star graph with more than $3$ leaf
  nodes. Let $V_1$ be a subset of nodes with degree $1$.  Then, the
  graph $H$ induced by $V_1$ is composed of isolated nodes.  In
  contrast, with the augmented node $\bar{v}$ contracted from $H^{c}$, $\bar{H}$
  is a star graph, all nodes in $V_1$ are attached to $\bar{v}$.
\item The granularity of a graph minor for representing $H^{c}$
  affects the connectivity and cluster granularity on
  $\bar{H}$. Revisit the graph $G_{\rm 2star}$.  We can contract
  $H^{c}$ to a graph minor with two nodes, each node is associated
  with the center of a star subgraph. If $V_1$ contains leaf nodes
  from both stars, then $\bar{H}$ is also a two-star graph.
\end{inparaenum}
As a matter of fact, the augmented subgraphs are used in both
modularity model analysis and practical search algorithms for
modularity maximization~\cite{liu2021,floros2022}.

\begin{figure}[bhtp]
  \hspace*{\fill}
\includegraphics[height=0.13\textheight]{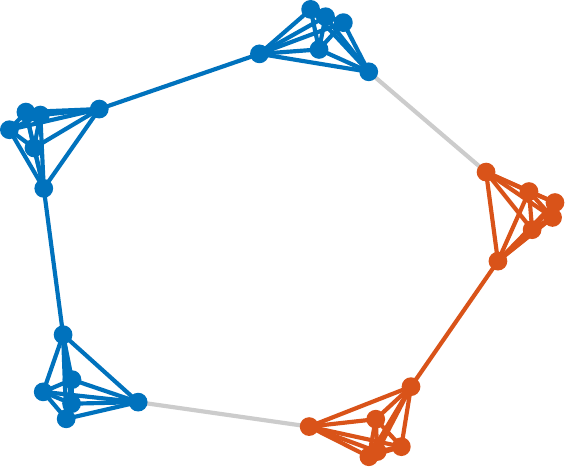}
\hspace*{\fill}
\includegraphics[height=0.10\textheight]{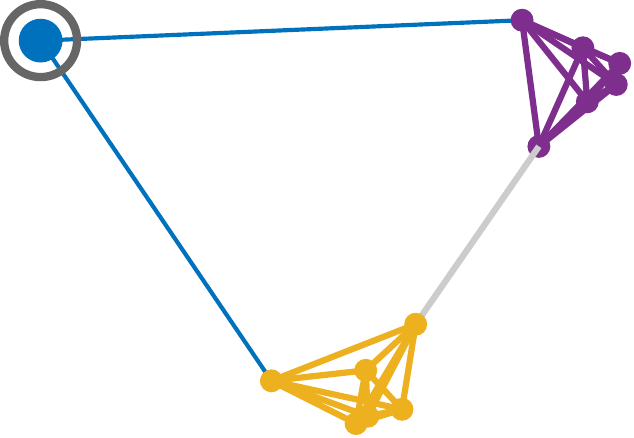}
\hspace*{\fill}
  \\[0.5em]
  \resizebox{\linewidth}{!}{\begin{tabular}{
      l
      S[table-format=1.3, table-number-alignment = right]
      S[table-format=1.3, table-number-alignment = right]
cc
cc
cc
cc
      }
      \toprule
&  &  
        & \multicolumn{2}{c}{\footnotesize $r_{w}=1$ } 
        & \multicolumn{2}{c}{\footnotesize $r_{e}=1$ } 
        & \multicolumn{2}{c}{\footnotesize $r_{w}=2$ } 
        & \multicolumn{2}{c}{\footnotesize $r_{e}=2$ } 
      \\
      \cmidrule(lr){4-5} \cmidrule(lr){6-7} \cmidrule(lr){8-9} \cmidrule(lr){10-11}
      \multicolumn{1}{l}{Graph} 
      & {$\lambda_{2}$} & {$\lambda_{3}$} 
      & {$t_{\min}$} & {$t_{\max}$}
      & {$t_{\min}$} & {$t_{\max}$}
      & {$t_{\min}$} & {$t_{\max}$}
      & {$t_{\min}$} & {$t_{\max}$}
      \\
      \midrule
      $G_{\textrm{RoK}}(6,5)$ & 0.033 & 0.033 & $\nicefrac{1}{3.47}$ & $\nicefrac{1.00}{1}$ & $\nicefrac{1}{3.61}$ & $\nicefrac{1.00}{1}$ & $\nicefrac{1}{3.37}$ & $\nicefrac{1.54}{1}$ & $\nicefrac{1}{\infty}$ & $\nicefrac{1.02}{1}$  
      \\
      $\bar{H}_{\textrm{red}}$ & 0.079 & 0.716 & $\nicefrac{1}{2.12}$ & $\nicefrac{1.70}{1}$ & $\nicefrac{1}{2.79}$ & $\nicefrac{1.04}{1}$ & $\nicefrac{1}{1.67}$ & $\nicefrac{2.46}{1}$ & $\nicefrac{1}{\infty}$ & $\nicefrac{1.08}{1}$  
      \\
      \bottomrule
    \end{tabular}
  }
  \caption{ {\tt Left}: The ring-of-clique (RoK) graph
    $G_{\textrm{RoK}(6,5)}$ and its division into two subgraphs
    $H_{\rm blue}$ and $H_{\rm red}$ in red and blue, respectively.
    The Fiedler value of $G_{\rm RoK(6,5)}$ is of multiplicity $2$ and
    relatively small to the average of the Laplacian values,
    It decreases with one edge removal/rewiring and becomes zero with two-edges
    removal/rewiring.
{\tt Right.} The augmented graph $\bar{H}_{\rm red}$ and the
    further split of $H_{\rm red}$ into two subgraphs in purple and
    yellow, respectively.  The circled blue node denotes the contracted graph $H_{\rm blue}$.  The two edges between
    $H_{\rm blue}$ and $H_{\rm red}$ are maintained in
    $\bar{H}_{\rm red}$.  The Fiedler value of $\bar{H}_{\rm red}$ is
    simple and small. It decreases with one-edge removal/rewiring and
    becomes zero with two-edges removal/rewiring.}
\label{fig:contracted-subgraph-of-RoK}
\end{figure}

We show in \cref{fig:contracted-subgraph-of-RoK} the RoK graph
$G_{\textrm{RoK}(6,5)}$, its division into two subgraphs $H_{\rm red}$
and $H_{\rm blue}$, and the augmented graph $\bar{H}_{\rm red}$.  The
augmented graph $\bar{H}_{\rm red}$ has the external node $\bar{v}$
(annotated by the circled blue marker) and maintains the two edges
between $H_{\rm blue}$ and $H_{\rm red}$.
The Fiedler value of $G_{\rm RoK(6,5)}$ is relatively small compared
to the average $n/(n-1)$ ($n=30$) of the Laplacian values for the
graph.  Furthermore, by the FPS descriptor
$t_{\min} = \min(\Lambda_2)/\lambda_2$, the Fiedler value becomes $3$
times smaller over one-edge removal/rewiring, and becomes $0$ over
two-edges removal.  The division of $G_{\textrm{RoK}(6,5)}$ into
$H_{\rm red}$ and $H_{\rm blue}$ is justified. By similar arguments,
the split of $\bar{H}_{\rm red}$ is certified.

\begin{figure}[htbp]
   \centering
    \hspace*{\fill}
    \includegraphics[height=0.14\textheight]{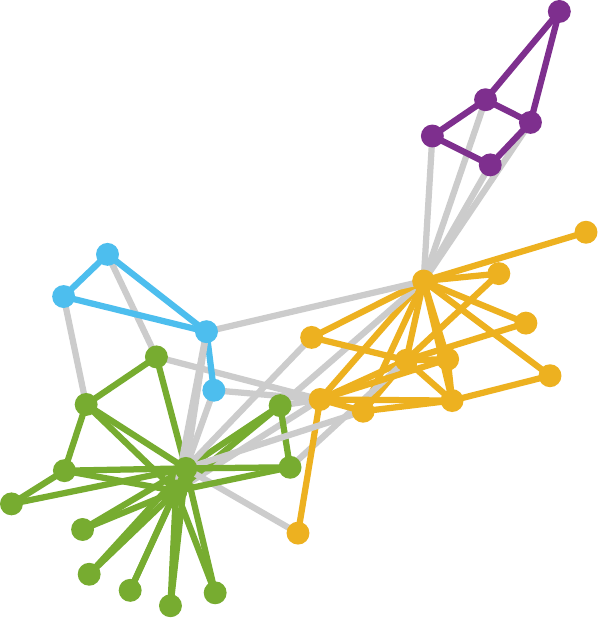}
\hspace*{\fill}
\includegraphics[height=0.14\textheight]{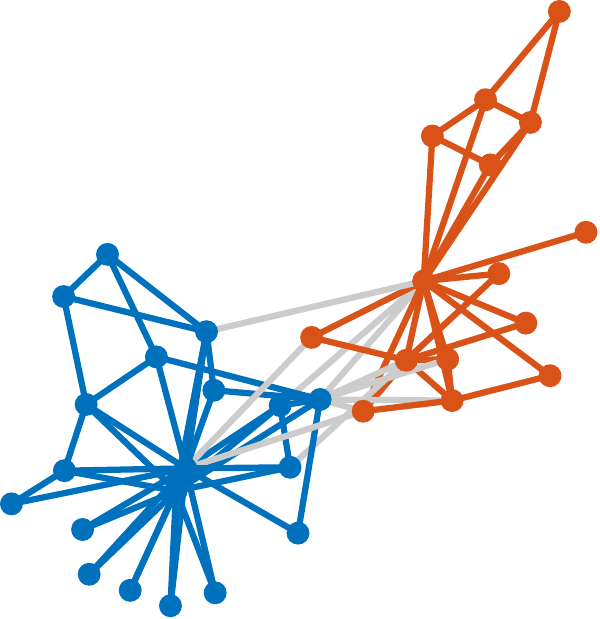}
\hspace*{\fill}
\\[0.5em]
  \resizebox{\linewidth}{!}{\begin{tabular}{
      l
      S[table-format=1.3, table-number-alignment = right]
      S[table-format=1.3, table-number-alignment = right]
cc
cc
cc
cc
      }
      \toprule
      &  &  
        & \multicolumn{2}{c}{\footnotesize $r_{w}=1$ } 
        & \multicolumn{2}{c}{\footnotesize $r_{e}=1$ } 
        & \multicolumn{2}{c}{\footnotesize $r_{w}=2$ } 
        & \multicolumn{2}{c}{\footnotesize $r_{e}=2$ } 
      \\
      \cmidrule(lr){4-5} \cmidrule(lr){6-7} \cmidrule(lr){8-9} \cmidrule(lr){10-11}
      \multicolumn{1}{l}{Graph} 
      & {$\lambda_{2}$} & {$\lambda_{3}$} 
      & {$t_{\min}$} & {$t_{\max}$}
      & {$t_{\min}$} & {$t_{\max}$}
      & {$t_{\min}$} & {$t_{\max}$}
      & {$t_{\min}$} & {$t_{\max}$}
      \\
      \midrule
      $G_{\textrm{karate}}$ & 0.132 & 0.287 & $\nicefrac{1}{1.11}$ & $\nicefrac{1.30}{1}$ & $\nicefrac{1}{1.18}$ & $\nicefrac{1.06}{1}$ & $\nicefrac{1}{1.20}$ & $\nicefrac{1.46}{1}$ & $\nicefrac{1}{1.18}$ & $\nicefrac{1.07}{1}$  
      \\
      $\bar{H}_{\textrm{blue}}$ & 0.379 & 0.419 & $\nicefrac{1}{1.20}$ & $\nicefrac{1.10}{1}$ & $\nicefrac{1}{1.17}$ & $\nicefrac{1.08}{1}$ & $\nicefrac{1}{1.23}$ & $\nicefrac{1.15}{1}$ & $\nicefrac{1}{1.27}$ & $\nicefrac{1.12}{1}$  
      \\
      $\bar{H}_{\textrm{red}}$ & 0.235 & 0.705 & $\nicefrac{1}{1.31}$ & $\nicefrac{1.40}{1}$ & $\nicefrac{1}{1.30}$ & $\nicefrac{1.11}{1}$ & $\nicefrac{1}{1.36}$ & $\nicefrac{1.68}{1}$ & $\nicefrac{1}{1.78}$ & $\nicefrac{1.19}{1}$  
      \\
      \bottomrule
    \end{tabular}
  }
  \caption{{\tt Left:} Zahary's karate club graph~\cite{zachary1977}
    is divided into four communities by the modularity model at
    $\gamma=1.0$.
{\tt Right:} The same graph is split into two subgraphs,
    $H_{\rm blue}$ in blue and $H_{\rm red} $ in red, by the Fiedler
    vector of the club graph. 
The table provides the Fiedler values and the FPS descriptors
    $(t_{\min}, t_{\max})$ for the club graph, the augmented subgraph
    $\bar{H}_{\rm red}$ and the augmented subgraph
    $\bar{H}_{\rm blue}$. }
  \label{fig:karate-cut}
\end{figure}

In \cref{fig:contracted-subgraph-of-RoK}, we show Zahary's karate club
graph~\cite{zachary1977} and two different community configurations.  The
configuration with four communities is by the modularity maximization at
$\gamma=1$.  The configuration with two communities is by the Fiedler spectral
information of the graph and the two (augmented) subgraphs by the Fiedler cut.
The Fiedler values of the subgraphs become much larger, they do not decrease
much by random removal/rewiring of one or two edges.

Our analysis and experimental results have shown that the Fiedler
values of (augmented) subgraphs vary from each other and from the
Fiedler value of graph $G$.  This brings to light another issue that
the $\gamma$-setting also affects subgraph split or merge decisions,
consequently, the recognition or misidentification of sub-community
structures.

\section{Discussion}
\label{sec:additional-implications}

Recall the standard modularity matrix $\hat{B}(\gamma)$ as defined in
(\ref{eq:B-Bhat-gamma}). The modularity spectral analysis across the
variation of the resolution parameter $\gamma$ presented in the
preceding sections is grounded in two key connections that we have
established.
First, we identify the $\gamma$-parametrized modularity function
$Q(\Omega \!\mid\! \gamma)$ over all possible binary graph partitions
with the quadratic expression $y^{\rm T}\hat{B}(\gamma)y$ over the
discrete set $Y = \{y \mid y = D^{1/2}s, s = \{1,-1\}^s\}$.
Secondly, as the discrete set $Y$ is fittingly embedded within the
unit sphere $S = \{u \mid u^{\rm T}u = 1\}$, the quadratic expression
of the modularity function $Q$ can be readily extended to its
continuous counterpart, the Rayleigh quotient
$R(u \mid \hat{B}(\gamma))$ on the sphere $S$.
Understanding $R(u \!\mid\! \hat{B}(\gamma))$ gives us invaluable
insights into the modularity maximization.
The behavior and properties of $R(u \mid \hat{B}(\gamma))$ on the
sphere $S$ can be entirely determined by the spectral structure of the
kernel matrix $\hat{B}(\gamma)$.

Our modularity spectral analysis offers a unique perspective on the
relationship between the graph spectral structure and the modularity
spectral structure.  In a nutshell, the entire Laplacian spectral
structure is in the modularity matrix $\hat{B}(\gamma)$, across all
variations in $\gamma$. Roughly speaking, while every Laplacian vector
is a scalar (heat) map over the vertex set at a specific Laplacian
eigenvalue (energy level), a maximum eigenvector of $\bar{B}(\gamma)$
induces a community map of the graph/network.

We have uncovered that every maximum eigenvector of the modularity
matrix $\hat{B}(\gamma)$, $\gamma>0$, lies in the invariant subspace
of the normalized graph Laplacian $L(G)$ associated with the two
smallest Laplacian eigenvalues: $\lambda_1=0$ and $\lambda_2>0$.  When
$\gamma\neq \lambda_2$, the maximum vectors of the modularity matrix
$\hat{B}(\gamma)$ are either in the null space of $L(G)$ or in the
Fiedler space as defined in \cref{sec:fiedler-connection}.
We provide a few more comments below.

\begin{inparaenum}
\item[\emph{(A) The proxy criteria \& configuration candidates.}]  The
  Fiedler connection analysis not only provides profound insights into
  the behaviors and properties of the modularity model and the
  maximization process, it also promises a potential impact on the
  development of effective criteria for $\gamma$-setting as well as
  for efficient algorithmic search of maximal-modularity
  configurations.  Specifically, we may use the Fiedler value
  $\lambda_2$ and the Fiedler pseudo-set $\Lambda_2(\varepsilon)$ of a
  graph, or a subgraph, to assess the merge or split results at a
  specified $\gamma$ value and provide valuable feedback for adaptive
  $\gamma$-setting.  Subsequently, the maximum vectors of matrix
  $\hat{B}(\gamma)$ propose configuration candidates.  All maximum
  vectors reside in the null space or/and the Fiedler space.
  Computationally, the eigenvector associated with the zero eigenvalue
  of the Laplacian is unique and it is $d^{1/2}$, readily available.
  The Fiedler value $\lambda_2$ can be easily obtained numerically. If
  $\lambda_2$ is simple, the Fiedler vector is unique, and it is also
  easily obtainable.

  In practice, the Fiedler vectors have long been used in spectral
  clustering algorithms for image segmentation and some other
  domain-specific graph clustering applications, and the search over
  a continuous domain is combined with the search over a discrete
  domain~\cite{shi2000,yu2003}. 

\item[\emph{(B) The Fiedler space \& graph symmetry.}]
When the Fiedler space is dominant in $\hat{B}(\gamma)$ and multi-dimensional,
  the variety in graph division is great and not to be overlooked.  The
  existence of many different Fiedler cuts on the same graph indicates the
  presence of homogeneous or symmetric substructures in the graph. Consider a
  few particular graphs with symmetric structures -- $K_n$, $G_{\rm cube(k)}$,
  and $G_{\rm buckyball}$.  The multiplicity of $\lambda_2$ reaches the highest
  at $(n\!-\!1)$ for the clique $K_n$, it is equal to $k$ for the
  $k$-dimensional hypercube, and equal to $3$ for the buckyball $G_{\rm
  buckyball}$.

  Although none of these specific graphs with symmetric structures,
  $K_n$, $G_{\rm cube(k)}$, $G_{\rm buckyball}$ and
  $G_{P\boxprod C(p,q)}$, should be divided, the presence of symmetric
  structures does not preclude the possibility of identifying
  meaningful graph splits. Consider, for instance, the RoK graph,
  $G_{\textrm{RoK}(p,q)}$, $p,q >4$. The multiplicity of its Fiedler
  value is greater than one, reflecting, indeed, the symmetric
  substructures. However, each clique $K_p$ in $G_{\textrm{RoK}(p,q)}$
  is certified as a closely tied community/cluster, as we have discussed in
  \cref{sec:FPS-transition,sec:subgraph-division}.

  The multiplicity of the Fiedler value highlights the broader
  phenomena where $\lambda_2$, which may be simple, is followed
  closely by $\lambda_3$ or more than one other eigenvalues.  For
  instance, the Fiedler value of the cylindrical mesh
  $G_{P\boxprod C(p,q)} $ is simple, and it is tightly followed by the
  other eigenvalues.  Such spectral cluster phenomena indicate
  (nearly) homogeneous or symmetric substructures within a graph.

  The important message is this. The spectral cluster at the Fiedler
  value, if not sensitive to pertrubation, signifies the strength in
  bounding. When the Fiedler value is small, the graph is thin and
  strong like a graphene mesh. This finding raises the new issue on
  the community bounding strength, in addition to the issue with
  community size/resolution. The answer is again in the Fiedler
  pseudo-set $\Lambda_2$.

  There are other complicated issues when the Fiedler space is
  dominant and multi-dimensional. All maximum vectors of $\hat{B}$
  arrive at the same maximal value $\beta_{\max}$ of the continuous
  quadratic function $u^{\rm T}\hat{B}(\gamma)\,u$. However, their
  sign indicators are not necessarily equal in the modularity value,
  $y(s)^{\rm T}\hat{B}(\gamma)\, y(s) = s^{\rm T} B(\gamma)s$, with
  $s=\mbox{\rm sign}(u)$. Ideally, we would like to locate a
  unit-length maximum vector with the maximal value of
  $ u^{\rm T}y({\rm sign}(u))$, the similarity between $u$ and its
  counterpart $y(\mbox{\rm sign}(u))$.
Computationally, an eigenvector(s) associated with the
  $\min( \Lambda_2(\varepsilon) ) $, with an $\varepsilon$-departure
  from the symmetry would help single out a good candidate
  configuration among innumerably many incidental combinations. This
  configuration would significantly reduce the combinatorial search
  cost.

\end{inparaenum}

\begin{inparaenum}
\item[] \emph{(C) Resolution parameter variation.}   
By our analysis, it is necessary to consider the consistency and the variation
  of the maximum cluster configurations in response to changes in  $\gamma$.
  Using the modularity model with a single value of $\gamma$, or even a narrow
  range of $\gamma$-values, is limited in detecting community structures with
  more than one heterogeneous communities.
It is recently reported in~\cite{floros2018, floros2022,
    floros2022a, liu2021,liu2021c} that a network with stable
  distinctive community structures is affiliated with more than one
  $\gamma$ regions, each region signifying the dominance of a cluster
  configuration at an intrinsic level of cluster granularity.

\item[] \emph{(D) Model transformation.}  
With its beautiful simplicity and other attractive properties, the
  $\gamma$-parametrized modularity model remains broadly deployed, among many
  other existing models, for community detection.
Our analysis exposes notwithstanding an inherent restriction within
  the framework across the entire $\gamma$-variation. Specifically,
  the spectral modulation mechanism affected by the modularity matrix
  $\hat{B}(\gamma)$ is constrained to the choice between the Fiedler
  space and the null space of the graph Laplacian.  The finer
  structures are captured by other eigenvectors at the lower end of
  the Laplacian spectrum.  It is not inconceivable to relax or remove
  the constraint without confining ourselves to the use of
  divide-and-conquer hierarchies.  It is possible to modulate or
  reshape the model structure over a range of the graph spectrum
  beyond the two smallest eigenvalues, $ \{\lambda_1,\lambda_2\} $,
  with respect to a particular class of graphs of interest. The
  normalized Laplacian spectral structure decomposition is identified
  with the generalized eigenvalue problem $Ax=\lambda Dx$.  The graph
  structures may also be analyzed or characterized by the generalized
  eigenvalue problem with a different symmetric, positive definite
  mass matrix $M$, $Ax = \lambda Mx$, or even a non-linear generalized
  eigenvalue problem~\cite{pasadakis2022,kawamoto2023}. It is likely that existing,
  emerging, and competing models for community detection could be
  interpreted as different spectral modulation mechanisms.
  
\end{inparaenum}

\appendix
\section{Quadratic form of modularity cuts}
\label{sec:appendix-proofs}

\begin{proof}[Proof of (\ref{eq:Q-B-gamma})]
Denote by $e(C)$ the vector that is constant $1$ with length $|C|$,
  $C\subset V$. For any configuration $\Omega$ with two
  clusters, $\Omega = \{C_1,C_2\}$, the adjacency matrix
  can be reordered so that
\vspace*{1em}
\[
    A = \left( \begin{array}{cc} A_{11} & A_{12}
        \\ A_{21} & A_{22} \end{array} \right),
    \quad
    A_{ij} = A( C_i,C_j),
    \quad i, j = 1,2. 
  \]
\vspace*{2em}
The indicator of $\Omega$ can then be described as follows, 
  \[
  s^{\rm T}(\Omega) = [ e^\mathrm{T}(C_1), \, -e^\mathrm{T}(C_2) ]. 
  \]

  \vspace*{1em}

  Assume that $e^{\rm T}Ae=1$. Then, the degree vector $d=Ae$ is
  stochastic. Let $d_{i} = d(C_i)$, $i=1,2$.
  We have
  \\
{ \footnotesize 
    \begin{align*} 
      & s^\mathrm{T}(\Omega) B(\gamma) s(\Omega) = 
\\
& \begin{bmatrix}
        e^\mathrm{T}(C_1) & -e^\mathrm{T}(C_2)
      \end{bmatrix}
\begin{bmatrix}
        A_{11} - \gamma d_ {1} d_{1}^\mathrm{T} 
          & A_{12} - \gamma d_ {1} d_{2}^\mathrm{T}
        \\[0.5em]
        A_{21} - \gamma d_ {2} d_{1}^\mathrm{T} 
          & A_{22} - \gamma d_ {2} d_{2}^\mathrm{T}
      \end{bmatrix}
\begin{bmatrix}
        e(C_1)
        \\[0.5em]
        -e(C_2)
      \end{bmatrix}.
     \end{align*}
  }
Using the $\alpha$-notation of (\ref{eq:Q-alpha}), we have
  $e^{\rm T}(C_i)d(C_i) = \alpha(C_i,V)$, $i=1,2$.  The two terms in
  $s^{\rm T}B(\gamma)s$ related to the diagonal blocks of $B(\gamma)$
  sum to
  \[ \footnotesize \sum_{i=1,2} \alpha(C_i,C_i) - \gamma
    \alpha^{2}(C_i,V) = Q(\Omega \mid \gamma).
  \]
  The two terms related to the off-diagonal blocks of $B(\gamma)$ are
  equal to each other by the symmetry and sum to 
  \begin{align*}
    2&\big( \gamma\, \alpha(C_1,V) \alpha(C_2,V) - \alpha(C_1,C_2) \big)
    \\
    &=
    \alpha(C_1,C_1) + \alpha(C_2,C_2) - \big( \underbrace{ \alpha(C_1, V) + \alpha(C_2, V) }_{=1} \big) + 
    \\ &\phantom{=} + \gamma \big( \, \underbrace{ \alpha(C_1, V) + \alpha(C_2, V) }_{=1} - \alpha^2(C_1,V) - \alpha^2(C_2,V) \, \big)
    \\
    & = Q(\Omega \mid \gamma) + \gamma - 1,
  \end{align*}
where the following equalities are used
\begin{align*} 
     \alpha(C_1,C_2) &= \alpha(C_2, C_1),
    \\
    \alpha(C_1,V) &+ \alpha(C_2,V) = 1,
    \\
    \alpha(C_1, C_2) &= \alpha(C_1,V) - \alpha(C_1,C_1) 
    \\
    &= \alpha(C_2,V) - \alpha(C_2,C_2).
  \end{align*} 
Thus, we have 
  \[
    Q(\Omega \mid \gamma) = \dfrac{1}{2} s^\mathrm{T}(\Omega)
    B(\gamma) s(\Omega) + \left( \dfrac{1 - \gamma}{2} \right).
  \]
  This equality extends to the all-in-one configuration $\Omega_{\vee}$
  because $  Q(\Omega_{\vee} \mid \gamma) = 1-\gamma$.
\end{proof}


\begin{thebibliography}{33}\makeatletter
\providecommand \@ifxundefined [1]{\@ifx{#1\undefined}
}\providecommand \@ifnum [1]{\ifnum #1\expandafter \@firstoftwo
 \else \expandafter \@secondoftwo
 \fi
}\providecommand \@ifx [1]{\ifx #1\expandafter \@firstoftwo
 \else \expandafter \@secondoftwo
 \fi
}\providecommand \natexlab [1]{#1}\providecommand \enquote  [1]{``#1''}\providecommand \bibnamefont  [1]{#1}\providecommand \bibfnamefont [1]{#1}\providecommand \citenamefont [1]{#1}\providecommand \href@noop [0]{\@secondoftwo}\providecommand \href [0]{\begingroup \@sanitize@url \@href}\providecommand \@href[1]{\@@startlink{#1}\@@href}\providecommand \@@href[1]{\endgroup#1\@@endlink}\providecommand \@sanitize@url [0]{\catcode `\\12\catcode `\$12\catcode `\&12\catcode `\#12\catcode `\^12\catcode `\_12\catcode `\%12\relax}\providecommand \@@startlink[1]{}\providecommand \@@endlink[0]{}\providecommand \url  [0]{\begingroup\@sanitize@url \@url }\providecommand \@url [1]{\endgroup\@href {#1}{\urlprefix }}\providecommand \urlprefix  [0]{URL }\providecommand \Eprint [0]{\href }\providecommand \doibase [0]{https://doi.org/}\providecommand \selectlanguage [0]{\@gobble}\providecommand \bibinfo  [0]{\@secondoftwo}\providecommand \bibfield  [0]{\@secondoftwo}\providecommand \translation [1]{[#1]}\providecommand \BibitemOpen [0]{}\providecommand \bibitemStop [0]{}\providecommand \bibitemNoStop [0]{.\EOS\space}\providecommand \EOS [0]{\spacefactor3000\relax}\providecommand \BibitemShut  [1]{\csname bibitem#1\endcsname}\let\auto@bib@innerbib\@empty
\bibitem [{\citenamefont {{de Solla Price}}(1965)}]{desollaprice1965}\BibitemOpen
  \bibfield  {author} {\bibinfo {author} {\bibfnamefont {D.~J.}\ \bibnamefont {{de Solla Price}}},\ }\bibfield  {title} {\bibinfo {title} {Networks of scientific papers},\ }\href {https://doi.org/10.1126/science.149.3683.510} {\bibfield  {journal} {\bibinfo  {journal} {Science}\ }\textbf {\bibinfo {volume} {149}},\ \bibinfo {pages} {510} (\bibinfo {year} {1965})}\BibitemShut {NoStop}\bibitem [{\citenamefont {Barab{\'a}si}\ and\ \citenamefont {Albert}(1999)}]{barabasi1999}\BibitemOpen
  \bibfield  {author} {\bibinfo {author} {\bibfnamefont {A.-L.}\ \bibnamefont {Barab{\'a}si}}\ and\ \bibinfo {author} {\bibfnamefont {R.}~\bibnamefont {Albert}},\ }\bibfield  {title} {\bibinfo {title} {Emergence of scaling in random networks},\ }\href {https://doi.org/10.1126/science.286.5439.509} {\bibfield  {journal} {\bibinfo  {journal} {Science}\ }\textbf {\bibinfo {volume} {286}},\ \bibinfo {pages} {509} (\bibinfo {year} {1999})}\BibitemShut {NoStop}\bibitem [{\citenamefont {Newman}(2001)}]{newman2001}\BibitemOpen
  \bibfield  {author} {\bibinfo {author} {\bibfnamefont {M.~E.~J.}\ \bibnamefont {Newman}},\ }\bibfield  {title} {\bibinfo {title} {The structure of scientific collaboration networks},\ }\href {https://doi.org/10.1073/pnas.98.2.404} {\bibfield  {journal} {\bibinfo  {journal} {Proceedings of the National Academy of Sciences}\ }\textbf {\bibinfo {volume} {98}},\ \bibinfo {pages} {404} (\bibinfo {year} {2001})}\BibitemShut {NoStop}\bibitem [{\citenamefont {Girvan}\ and\ \citenamefont {Newman}(2002)}]{girvan2002}\BibitemOpen
  \bibfield  {author} {\bibinfo {author} {\bibfnamefont {M.}~\bibnamefont {Girvan}}\ and\ \bibinfo {author} {\bibfnamefont {M.~E.~J.}\ \bibnamefont {Newman}},\ }\bibfield  {title} {\bibinfo {title} {Community structure in social and biological networks},\ }\href {https://doi.org/10.1073/pnas.122653799} {\bibfield  {journal} {\bibinfo  {journal} {Proceedings of the National Academy of Sciences}\ }\textbf {\bibinfo {volume} {99}},\ \bibinfo {pages} {7821} (\bibinfo {year} {2002})}\BibitemShut {NoStop}\bibitem [{\citenamefont {Newman}(2006)}]{newman2006}\BibitemOpen
  \bibfield  {author} {\bibinfo {author} {\bibfnamefont {M.~E.~J.}\ \bibnamefont {Newman}},\ }\bibfield  {title} {\bibinfo {title} {Modularity and community structure in networks},\ }\href {https://doi.org/10.1073/pnas.0601602103} {\bibfield  {journal} {\bibinfo  {journal} {Proceedings of the National Academy of Sciences}\ }\textbf {\bibinfo {volume} {103}},\ \bibinfo {pages} {8577} (\bibinfo {year} {2006})}\BibitemShut {NoStop}\bibitem [{\citenamefont {Rosvall}\ and\ \citenamefont {Bergstrom}(2008)}]{rosvall2008}\BibitemOpen
  \bibfield  {author} {\bibinfo {author} {\bibfnamefont {M.}~\bibnamefont {Rosvall}}\ and\ \bibinfo {author} {\bibfnamefont {C.~T.}\ \bibnamefont {Bergstrom}},\ }\bibfield  {title} {\bibinfo {title} {Maps of random walks on complex networks reveal community structure},\ }\href {https://doi.org/10.1073/pnas.0706851105} {\bibfield  {journal} {\bibinfo  {journal} {Proceedings of the National Academy of Sciences}\ }\textbf {\bibinfo {volume} {105}},\ \bibinfo {pages} {1118} (\bibinfo {year} {2008})}\BibitemShut {NoStop}\bibitem [{\citenamefont {Newman}\ \emph {et~al.}(2011)\citenamefont {Newman}, \citenamefont {Barab{\'a}si},\ and\ \citenamefont {Watts}}]{newman2011}\BibitemOpen
  \bibfield  {author} {\bibinfo {author} {\bibfnamefont {M.}~\bibnamefont {Newman}}, \bibinfo {author} {\bibfnamefont {A.-L.}\ \bibnamefont {Barab{\'a}si}},\ and\ \bibinfo {author} {\bibfnamefont {D.~J.}\ \bibnamefont {Watts}},\ }\href@noop {} {\emph {\bibinfo {title} {The {{Structure}} and {{Dynamics}} of {{Networks}}}}}\ (\bibinfo  {publisher} {{Princeton University Press}},\ \bibinfo {address} {{Princeton, P.A.}},\ \bibinfo {year} {2011})\BibitemShut {NoStop}\bibitem [{\citenamefont {Barab{\'a}si}\ and\ \citenamefont {P{\'o}sfai}(2016)}]{barabasi2016}\BibitemOpen
  \bibfield  {author} {\bibinfo {author} {\bibfnamefont {A.-L.}\ \bibnamefont {Barab{\'a}si}}\ and\ \bibinfo {author} {\bibfnamefont {M.}~\bibnamefont {P{\'o}sfai}},\ }\href@noop {} {\emph {\bibinfo {title} {Network {{Science}}}}}\ (\bibinfo  {publisher} {{Cambridge University Press}},\ \bibinfo {address} {{Cambridge, United Kingdom}},\ \bibinfo {year} {2016})\BibitemShut {NoStop}\bibitem [{\citenamefont {Newman}(2018)}]{newman2018}\BibitemOpen
  \bibfield  {author} {\bibinfo {author} {\bibfnamefont {M.~E.~J.}\ \bibnamefont {Newman}},\ }\href@noop {} {\emph {\bibinfo {title} {Networks}}},\ \bibinfo {edition} {second edition}\ ed.\ (\bibinfo  {publisher} {{Oxford University Press}},\ \bibinfo {address} {{Oxford, United Kingdom; New York, NY, United States of America}},\ \bibinfo {year} {2018})\BibitemShut {NoStop}\bibitem [{\citenamefont {Fortunato}\ and\ \citenamefont {Barthelemy}(2007)}]{fortunato2007}\BibitemOpen
  \bibfield  {author} {\bibinfo {author} {\bibfnamefont {S.}~\bibnamefont {Fortunato}}\ and\ \bibinfo {author} {\bibfnamefont {M.}~\bibnamefont {Barthelemy}},\ }\bibfield  {title} {\bibinfo {title} {Resolution limit in community detection},\ }\href {https://doi.org/10.1073/pnas.0605965104} {\bibfield  {journal} {\bibinfo  {journal} {Proceedings of the National Academy of Sciences}\ }\textbf {\bibinfo {volume} {104}},\ \bibinfo {pages} {36} (\bibinfo {year} {2007})}\BibitemShut {NoStop}\bibitem [{\citenamefont {Reichardt}\ and\ \citenamefont {Bornholdt}(2006)}]{reichardt2006}\BibitemOpen
  \bibfield  {author} {\bibinfo {author} {\bibfnamefont {J.}~\bibnamefont {Reichardt}}\ and\ \bibinfo {author} {\bibfnamefont {S.}~\bibnamefont {Bornholdt}},\ }\bibfield  {title} {\bibinfo {title} {Statistical mechanics of community detection},\ }\href {https://doi.org/10.1103/PhysRevE.74.016110} {\bibfield  {journal} {\bibinfo  {journal} {Physical Review E}\ }\textbf {\bibinfo {volume} {74}},\ \bibinfo {pages} {016110} (\bibinfo {year} {2006})}\BibitemShut {NoStop}\bibitem [{\citenamefont {Sheikholeslami}\ \emph {et~al.}(1998)\citenamefont {Sheikholeslami}, \citenamefont {Chatterjee},\ and\ \citenamefont {Zhang}}]{sheikholeslami1998}\BibitemOpen
  \bibfield  {author} {\bibinfo {author} {\bibfnamefont {G.}~\bibnamefont {Sheikholeslami}}, \bibinfo {author} {\bibfnamefont {S.}~\bibnamefont {Chatterjee}},\ and\ \bibinfo {author} {\bibfnamefont {A.}~\bibnamefont {Zhang}},\ }\bibfield  {title} {\bibinfo {title} {{{WaveCluster}}: {{A}} multi-resolution clustering approach for very large spatial databases},\ }in\ \href@noop {} {\emph {\bibinfo {booktitle} {Proceedings of the 24rd {{International Conference}} on {{Very Large Data Bases}}}}}\ (\bibinfo {year} {1998})\ pp.\ \bibinfo {pages} {428--439}\BibitemShut {NoStop}\bibitem [{\citenamefont {Schaub}\ \emph {et~al.}(2012)\citenamefont {Schaub}, \citenamefont {Lambiotte},\ and\ \citenamefont {Barahona}}]{schaub2012}\BibitemOpen
  \bibfield  {author} {\bibinfo {author} {\bibfnamefont {M.~T.}\ \bibnamefont {Schaub}}, \bibinfo {author} {\bibfnamefont {R.}~\bibnamefont {Lambiotte}},\ and\ \bibinfo {author} {\bibfnamefont {M.}~\bibnamefont {Barahona}},\ }\bibfield  {title} {\bibinfo {title} {Encoding dynamics for multiscale community detection: {{Markov}} time sweeping for the map equation},\ }\href {https://doi.org/10.1103/PhysRevE.86.026112} {\bibfield  {journal} {\bibinfo  {journal} {Physical Review E}\ }\textbf {\bibinfo {volume} {86}},\ \bibinfo {pages} {026112} (\bibinfo {year} {2012})}\BibitemShut {NoStop}\bibitem [{\citenamefont {Traag}\ \emph {et~al.}(2013)\citenamefont {Traag}, \citenamefont {Krings},\ and\ \citenamefont {Van~Dooren}}]{traag2013}\BibitemOpen
  \bibfield  {author} {\bibinfo {author} {\bibfnamefont {V.~A.}\ \bibnamefont {Traag}}, \bibinfo {author} {\bibfnamefont {G.}~\bibnamefont {Krings}},\ and\ \bibinfo {author} {\bibfnamefont {P.}~\bibnamefont {Van~Dooren}},\ }\bibfield  {title} {\bibinfo {title} {Significant scales in community structure},\ }\href {https://doi.org/10.1038/srep02930} {\bibfield  {journal} {\bibinfo  {journal} {Scientific Reports}\ }\textbf {\bibinfo {volume} {3}},\ \bibinfo {pages} {2930} (\bibinfo {year} {2013})}\BibitemShut {NoStop}\bibitem [{\citenamefont {Kawamoto}\ and\ \citenamefont {Rosvall}(2015)}]{kawamoto2015}\BibitemOpen
  \bibfield  {author} {\bibinfo {author} {\bibfnamefont {T.}~\bibnamefont {Kawamoto}}\ and\ \bibinfo {author} {\bibfnamefont {M.}~\bibnamefont {Rosvall}},\ }\bibfield  {title} {\bibinfo {title} {Estimating the resolution limit of the map equation in community detection},\ }\href {https://doi.org/10.1103/PhysRevE.91.012809} {\bibfield  {journal} {\bibinfo  {journal} {Physical Review E}\ }\textbf {\bibinfo {volume} {91}},\ \bibinfo {pages} {012809} (\bibinfo {year} {2015})}\BibitemShut {NoStop}\bibitem [{\citenamefont {Chen}\ \emph {et~al.}(2018)\citenamefont {Chen}, \citenamefont {Wang}, \citenamefont {Bao}, \citenamefont {Tang}, \citenamefont {Zhou}, \citenamefont {Xiang}, \citenamefont {Li},\ and\ \citenamefont {Yi}}]{chen2018}\BibitemOpen
  \bibfield  {author} {\bibinfo {author} {\bibfnamefont {S.}~\bibnamefont {Chen}}, \bibinfo {author} {\bibfnamefont {Z.-Z.}\ \bibnamefont {Wang}}, \bibinfo {author} {\bibfnamefont {M.-H.}\ \bibnamefont {Bao}}, \bibinfo {author} {\bibfnamefont {L.}~\bibnamefont {Tang}}, \bibinfo {author} {\bibfnamefont {J.}~\bibnamefont {Zhou}}, \bibinfo {author} {\bibfnamefont {J.}~\bibnamefont {Xiang}}, \bibinfo {author} {\bibfnamefont {J.-M.}\ \bibnamefont {Li}},\ and\ \bibinfo {author} {\bibfnamefont {C.-H.}\ \bibnamefont {Yi}},\ }\bibfield  {title} {\bibinfo {title} {Adaptive multi-resolution {{Modularity}} for detecting communities in networks},\ }\href {https://doi.org/10.1016/j.physa.2017.09.023} {\bibfield  {journal} {\bibinfo  {journal} {Physica A: Statistical Mechanics and its Applications}\ }\textbf {\bibinfo {volume} {491}},\ \bibinfo {pages} {591} (\bibinfo {year} {2018})}\BibitemShut {NoStop}\bibitem [{\citenamefont {Veldt}\ \emph {et~al.}(2019)\citenamefont {Veldt}, \citenamefont {Gleich},\ and\ \citenamefont {Wirth}}]{veldt2019}\BibitemOpen
  \bibfield  {author} {\bibinfo {author} {\bibfnamefont {N.}~\bibnamefont {Veldt}}, \bibinfo {author} {\bibfnamefont {D.}~\bibnamefont {Gleich}},\ and\ \bibinfo {author} {\bibfnamefont {A.}~\bibnamefont {Wirth}},\ }\bibfield  {title} {\bibinfo {title} {Learning resolution parameters for graph clustering},\ }in\ \href {https://doi.org/10.1145/3308558.3313471} {\emph {\bibinfo {booktitle} {The {{World Wide Web Conference}}}}}\ (\bibinfo {year} {2019})\ pp.\ \bibinfo {pages} {1909--1919}\BibitemShut {NoStop}\bibitem [{\citenamefont {Horn}\ and\ \citenamefont {Johnson}(2012)}]{horn2012}\BibitemOpen
  \bibfield  {author} {\bibinfo {author} {\bibfnamefont {R.~A.}\ \bibnamefont {Horn}}\ and\ \bibinfo {author} {\bibfnamefont {C.~R.}\ \bibnamefont {Johnson}},\ }\href@noop {} {\emph {\bibinfo {title} {Matrix {{Analysis}}}}},\ \bibinfo {edition} {2nd}\ ed.\ (\bibinfo  {publisher} {{Cambridge University Press}},\ \bibinfo {address} {{Cambridge; New York}},\ \bibinfo {year} {2012})\BibitemShut {NoStop}\bibitem [{\citenamefont {Fiedler}(1973)}]{fiedler1973}\BibitemOpen
  \bibfield  {author} {\bibinfo {author} {\bibfnamefont {M.}~\bibnamefont {Fiedler}},\ }\bibfield  {title} {\bibinfo {title} {Algebraic connectivity of graphs},\ }\href@noop {} {\bibfield  {journal} {\bibinfo  {journal} {Czechoslovak Mathematical Journal}\ }\textbf {\bibinfo {volume} {23}},\ \bibinfo {pages} {298} (\bibinfo {year} {1973})}\BibitemShut {NoStop}\bibitem [{\citenamefont {Chung}(1997)}]{chung1997}\BibitemOpen
  \bibfield  {author} {\bibinfo {author} {\bibfnamefont {F.~R.}\ \bibnamefont {Chung}},\ }\href@noop {} {\emph {\bibinfo {title} {Spectral {{Graph Theory}}}}}\ (\bibinfo  {publisher} {{American Mathematical Society}},\ \bibinfo {year} {1997})\BibitemShut {NoStop}\bibitem [{\citenamefont {Zachary}(1977)}]{zachary1977}\BibitemOpen
  \bibfield  {author} {\bibinfo {author} {\bibfnamefont {W.~W.}\ \bibnamefont {Zachary}},\ }\bibfield  {title} {\bibinfo {title} {An information flow model for conflict and fission in small groups},\ }\href@noop {} {\bibfield  {journal} {\bibinfo  {journal} {Journal of Anthropological Research}\ }\textbf {\bibinfo {volume} {33}},\ \bibinfo {pages} {452} (\bibinfo {year} {1977})}\BibitemShut {NoStop}\bibitem [{\citenamefont {Silva}\ \emph {et~al.}(2022)\citenamefont {Silva}, \citenamefont {Albeshri}, \citenamefont {Thayananthan}, \citenamefont {Alhalabi},\ and\ \citenamefont {Fortunato}}]{silva2022}\BibitemOpen
  \bibfield  {author} {\bibinfo {author} {\bibfnamefont {F.~N.}\ \bibnamefont {Silva}}, \bibinfo {author} {\bibfnamefont {A.}~\bibnamefont {Albeshri}}, \bibinfo {author} {\bibfnamefont {V.}~\bibnamefont {Thayananthan}}, \bibinfo {author} {\bibfnamefont {W.}~\bibnamefont {Alhalabi}},\ and\ \bibinfo {author} {\bibfnamefont {S.}~\bibnamefont {Fortunato}},\ }\bibfield  {title} {\bibinfo {title} {Robustness modularity in complex networks},\ }\href {https://doi.org/10.1103/PhysRevE.105.054308} {\bibfield  {journal} {\bibinfo  {journal} {Physical Review E}\ }\textbf {\bibinfo {volume} {105}},\ \bibinfo {pages} {054308} (\bibinfo {year} {2022})}\BibitemShut {NoStop}\bibitem [{\citenamefont {Erd{\H o}s}\ and\ \citenamefont {R{\'e}nyi}(1959)}]{erdos1959}\BibitemOpen
  \bibfield  {author} {\bibinfo {author} {\bibfnamefont {P.}~\bibnamefont {Erd{\H o}s}}\ and\ \bibinfo {author} {\bibfnamefont {A.}~\bibnamefont {R{\'e}nyi}},\ }\bibfield  {title} {\bibinfo {title} {On random graphs {{I}}.},\ }\href@noop {} {\bibfield  {journal} {\bibinfo  {journal} {Publicationes Mathematicae}\ }\textbf {\bibinfo {volume} {6}},\ \bibinfo {pages} {290} (\bibinfo {year} {1959})}\BibitemShut {NoStop}\bibitem [{\citenamefont {Bollob{\'a}s}(2001)}]{bollobas2001}\BibitemOpen
  \bibfield  {author} {\bibinfo {author} {\bibfnamefont {B.}~\bibnamefont {Bollob{\'a}s}},\ }\href@noop {} {\emph {\bibinfo {title} {Random Graphs}}},\ \bibinfo {edition} {2nd}\ ed.,\ \bibinfo {number} {73}\ (\bibinfo  {publisher} {{Cambridge University Press}},\ \bibinfo {year} {2001})\BibitemShut {NoStop}\bibitem [{\citenamefont {Liu}(2021)}]{liu2021}\BibitemOpen
  \bibfield  {author} {\bibinfo {author} {\bibfnamefont {T.}~\bibnamefont {Liu}},\ }\emph {\bibinfo {title} {On Graph Clustering or Community Detection: {{A}} Characteristic Analysis and Its Implications}},\ \href@noop {} {Ph.D. thesis},\ \bibinfo  {school} {Duke University}, \bibinfo {address} {{Durham, NC, USA}} (\bibinfo {year} {2021})\BibitemShut {NoStop}\bibitem [{\citenamefont {Floros}(2022)}]{floros2022}\BibitemOpen
  \bibfield  {author} {\bibinfo {author} {\bibfnamefont {D.}~\bibnamefont {Floros}},\ }\emph {\bibinfo {title} {Efficient Analysis of Local and Global Structures in Large Networks}},\ \href@noop {} {Ph.D. thesis},\ \bibinfo  {school} {Aristotle University of Thessaloniki}, \bibinfo {address} {{Greece}} (\bibinfo {year} {2022})\BibitemShut {NoStop}\bibitem [{\citenamefont {Shi}\ and\ \citenamefont {Malik}(2000)}]{shi2000}\BibitemOpen
  \bibfield  {author} {\bibinfo {author} {\bibfnamefont {J.}~\bibnamefont {Shi}}\ and\ \bibinfo {author} {\bibfnamefont {J.}~\bibnamefont {Malik}},\ }\bibfield  {title} {\bibinfo {title} {Normalized cuts and image segmentation},\ }\href {https://doi.org/10.1109/34.868688} {\bibfield  {journal} {\bibinfo  {journal} {IEEE Transactions on Pattern Analysis and Machine Intelligence}\ }\textbf {\bibinfo {volume} {22}},\ \bibinfo {pages} {888} (\bibinfo {year} {2000})}\BibitemShut {NoStop}\bibitem [{\citenamefont {{Yu}}\ and\ \citenamefont {{Shi}}(2003)}]{yu2003}\BibitemOpen
  \bibfield  {author} {\bibinfo {author} {\bibnamefont {{Yu}}}\ and\ \bibinfo {author} {\bibnamefont {{Shi}}},\ }\bibfield  {title} {\bibinfo {title} {Multiclass spectral clustering},\ }in\ \href {https://doi.org/10.1109/ICCV.2003.1238361} {\emph {\bibinfo {booktitle} {Proceedings {{Ninth IEEE International Conference}} on {{Computer Vision}}}}}\ (\bibinfo  {publisher} {{IEEE}},\ \bibinfo {address} {{Nice, France}},\ \bibinfo {year} {2003})\ pp.\ \bibinfo {pages} {313--319 vol.1}\BibitemShut {NoStop}\bibitem [{\citenamefont {Floros}\ \emph {et~al.}(2018)\citenamefont {Floros}, \citenamefont {Liu}, \citenamefont {Pitsianis},\ and\ \citenamefont {Sun}}]{floros2018}\BibitemOpen
  \bibfield  {author} {\bibinfo {author} {\bibfnamefont {D.}~\bibnamefont {Floros}}, \bibinfo {author} {\bibfnamefont {T.}~\bibnamefont {Liu}}, \bibinfo {author} {\bibfnamefont {N.}~\bibnamefont {Pitsianis}},\ and\ \bibinfo {author} {\bibfnamefont {X.}~\bibnamefont {Sun}},\ }\bibfield  {title} {\bibinfo {title} {Sparse dual of the density peaks algorithm for cluster analysis of high-dimensional data},\ }in\ \href {https://doi.org/10.1109/HPEC.2018.8547519} {\emph {\bibinfo {booktitle} {2018 {{IEEE}} High Performance Extreme Computing Conference ({{HPEC}})}}}\ (\bibinfo {organization} {{IEEE}},\ \bibinfo {year} {2018})\ pp.\ \bibinfo {pages} {1--14}\BibitemShut {NoStop}\bibitem [{\citenamefont {Floros}\ \emph {et~al.}(2022)\citenamefont {Floros}, \citenamefont {Liu}, \citenamefont {Pitsianis},\ and\ \citenamefont {Sun}}]{floros2022a}\BibitemOpen
  \bibfield  {author} {\bibinfo {author} {\bibfnamefont {D.}~\bibnamefont {Floros}}, \bibinfo {author} {\bibfnamefont {T.}~\bibnamefont {Liu}}, \bibinfo {author} {\bibfnamefont {N.}~\bibnamefont {Pitsianis}},\ and\ \bibinfo {author} {\bibfnamefont {X.}~\bibnamefont {Sun}},\ }\bibfield  {title} {\bibinfo {title} {Fast graph algorithms for superpixel segmentation},\ }in\ \href {https://doi.org/10.1109/HPEC55821.2022.9926359} {\emph {\bibinfo {booktitle} {2022 {{IEEE High Performance Extreme Computing Conference}} ({{HPEC}})}}}\ (\bibinfo  {publisher} {{IEEE}},\ \bibinfo {address} {{Waltham, MA, USA}},\ \bibinfo {year} {2022})\ pp.\ \bibinfo {pages} {1--8}\BibitemShut {NoStop}\bibitem [{\citenamefont {Liu}\ \emph {et~al.}(2021)\citenamefont {Liu}, \citenamefont {Floros}, \citenamefont {Pitsianis},\ and\ \citenamefont {Sun}}]{liu2021c}\BibitemOpen
  \bibfield  {author} {\bibinfo {author} {\bibfnamefont {T.}~\bibnamefont {Liu}}, \bibinfo {author} {\bibfnamefont {D.}~\bibnamefont {Floros}}, \bibinfo {author} {\bibfnamefont {N.}~\bibnamefont {Pitsianis}},\ and\ \bibinfo {author} {\bibfnamefont {X.}~\bibnamefont {Sun}},\ }\bibfield  {title} {\bibinfo {title} {Digraph clustering by the {{BlueRed}} method},\ }in\ \href {https://doi.org/10.1109/HPEC49654.2021.9622834} {\emph {\bibinfo {booktitle} {2021 {{IEEE High Performance Extreme Computing Conference}} ({{HPEC}})}}}\ (\bibinfo {year} {2021})\ pp.\ \bibinfo {pages} {1--7}\BibitemShut {NoStop}\bibitem [{\citenamefont {Pasadakis}\ \emph {et~al.}(2022)\citenamefont {Pasadakis}, \citenamefont {Alappat}, \citenamefont {Schenk},\ and\ \citenamefont {Wellein}}]{pasadakis2022}\BibitemOpen
  \bibfield  {author} {\bibinfo {author} {\bibfnamefont {D.}~\bibnamefont {Pasadakis}}, \bibinfo {author} {\bibfnamefont {C.~L.}\ \bibnamefont {Alappat}}, \bibinfo {author} {\bibfnamefont {O.}~\bibnamefont {Schenk}},\ and\ \bibinfo {author} {\bibfnamefont {G.}~\bibnamefont {Wellein}},\ }\bibfield  {title} {\bibinfo {title} {Multiway p-spectral graph cuts on {{Grassmann}} manifolds},\ }\href {https://doi.org/10.1007/s10994-021-06108-1} {\bibfield  {journal} {\bibinfo  {journal} {Machine Learning}\ }\textbf {\bibinfo {volume} {111}},\ \bibinfo {pages} {791} (\bibinfo {year} {2022})}\BibitemShut {NoStop}\bibitem [{\citenamefont {Kawamoto}\ \emph {et~al.}(2023)\citenamefont {Kawamoto}, \citenamefont {Ochi},\ and\ \citenamefont {Kobayashi}}]{kawamoto2023}\BibitemOpen
  \bibfield  {author} {\bibinfo {author} {\bibfnamefont {T.}~\bibnamefont {Kawamoto}}, \bibinfo {author} {\bibfnamefont {M.}~\bibnamefont {Ochi}},\ and\ \bibinfo {author} {\bibfnamefont {T.}~\bibnamefont {Kobayashi}},\ }\bibfield  {title} {\bibinfo {title} {Consistency between ordering and clustering methods for graphs},\ }\href {https://doi.org/10.1103/PhysRevResearch.5.023006} {\bibfield  {journal} {\bibinfo  {journal} {Physical Review Research}\ }\textbf {\bibinfo {volume} {5}},\ \bibinfo {pages} {023006} (\bibinfo {year} {2023})}\BibitemShut {NoStop}\end{thebibliography}
\end{document}